\crefname{figure}{Fig.}{Fig.}
\newcommand{\eqdef}{\coloneqq}
\providecommand{\eqdef}{\stackrel{\text{\tiny def}}{=}}
\DeclarePairedDelimiter{\paren}{(}{)}
\DeclarePairedDelimiter{\vect}{(}{)}
\DeclarePairedDelimiterX{\inner}[2]{\langle}{\rangle}{#1,\,#2}
\DeclarePairedDelimiter{\norm}{\|}{\|}
\DeclarePairedDelimiterXPP{\mean}[1]{\mathbb{E}}{[}{]}{}{%
   #1%
}
\newcommand{\aset}{\mathcal{S}}
\newcommand{\reals}{\mathbb{R}}
\newcommand{\horizon}{\reals_{+}}
\newcommand{\noplayers}{n}
\newcommand{\pures}{\mathcal{A}}
\newcommand{\pure}{a}
\newcommand{\nopures}{m}
\newcommand{\actions}{\mathcal{X}}
\newcommand{\action}{x}
\newcommand{\payoff}{u}
\newcommand{\paymat}{\mathbf{A}}
\newcommand{\policy}{f}
\newcommand{\controls}{\Omega}
\newcommand{\control}{\omega}
\DeclareMathOperator{\simplex}{\Delta}
\DeclareAcronym{AI}{%
    long = Artificial Intelligence,
}
\DeclareAcronym{SINDy}{%
    long = Sparse Identification of Nonlinear Dynamics,
}
\DeclareAcronym{SIAR}{%
    long = Side Information Assisted Regression,
}
\DeclareAcronym{MPC}{%
    long = Model Predictive Control,
}
\DeclareAcronym{PINN}{%
    long = Physics-Informed Neural Network,
}
\DeclareAcronym{RFI}{%
    long = Robust Forward Invariance,
}
\DeclareAcronym{PC}{%
    long = Positive Correlation,
}
\DeclareAcronym{SOS}{%
    long = Sum-of-Squares,
}
\newcommand{\SINDYc}{SINDYc}
\newcommand{\SIARc}{\acs*{SIAR}c}
\newcommand{\SIARMPC}{\acs*{SIAR}\nobreakdash-\acs*{MPC}}
\newcommand{\SINDYMPC}{SINDY\nobreakdash-\acs*{MPC}}
\newcommand{\PINNMPC}{\acs*{PINN}\nobreakdash-\acs*{MPC}}
\newcommand{\figureposition}[1]{\textit{(#1)}}
\newcommand{\figl}{\figureposition{left}}
\newcommand{\figc}{\figureposition{center}}
\newcommand{\figr}{\figureposition{right}}
\def \R{\mathbb{R}}
\def \X{\mathcal{X}}
\newenvironment{customthm}[1]
{\innercustomthm}
  {\endinnercustomthm}
\newcommand*\contd{C_1}
\newcommand*\contxw{\contd(\X \times \Omega)}
\newcommand*\dist{d_{\X \times \Omega, T}}
\newcommand*\xinit{x_0}
\newcommand*\wseq{\boldsymbol{\omega}}
\newcommand*\normxw[1]{\norm{#1}_{\X \times \Omega}}
\newcommand*\ds{\, \text{d} s}
\newcommand*\txw{t, \xinit, \wseq}
\newcommand*\sxw{s, \xinit, \wseq}
\newcommand*\txwadmis{\mathcal{S}}
\newcommand*\wseqadmis{\mathcal{C}}
\newcommand*\xfw{x_{f, \wseq}}
\newcommand*\xgw{x_{g, \wseq}}
\newcommand*\tx{t, \xinit}
\newcommand*\sx{s, \xinit}
\newcommand*\dotxfw{\dot{x}_{f, \wseq}}
\newcommand*\dotxgw{\dot{x}_{g, \wseq}}
\newcommand*\ls{L_{S}}
\DeclarePairedDelimiter{\abs}{\lvert}{\rvert}
\newcommand*\fa{\; \forall \, }
\newcommand*\qfa{\quad \forall \, }
\newcommand*\qqfa{\qquad \forall \, }
\newcommand\sbullet[1][.5]{\mathbin{\vcenter{\hbox{\scalebox{#1}{$\bullet$}}}}}
\newcommand*\farg{\sbullet}
\newcommand*\innerprod[2]{\left\langle #1, #2 \right\rangle}
\newcommand*\inp{\innerprod}
\title[Learning and steering game dynamics towards desirable outcomes]{Learning and steering game dynamics towards desirable outcomes}
\begin{document}

\maketitle

\begin{abstract}%
Game dynamics, which describe how agents' strategies evolve over time based on past interactions, can exhibit a variety of 
undesirable behaviours including
convergence to suboptimal equilibria, cycling, and chaos.~While central planners can employ incentives to mitigate such behaviors and steer game dynamics towards desirable outcomes, the effectiveness of such interventions critically relies on accurately predicting agents' responses to these incentives---a task made particularly challenging when the underlying dynamics are unknown and observations are limited. To address this challenge, this work introduces the Side Information Assisted Regression with Model Predictive Control (SIAR-MPC) framework. We extend the recently introduced SIAR method to incorporate the effect of control, enabling it to
utilize side-information constraints inherent to game-theoretic applications to model agents' responses to incentives from scarce data.~MPC then leverages this model to implement dynamic incentive adjustments.
Our experiments demonstrate the effectiveness of SIAR-MPC in guiding systems towards socially optimal equilibria, stabilizing chaotic and cycling behaviors. Notably, it achieves these results in data-scarce settings of few learning samples, where well-known system identification methods paired with MPC show less effective results.
\end{abstract}

\begin{keywords}%
game dynamics, system identification, model predictive control, sum of squares optimization, steering
\end{keywords}
\section{Introduction}
Game theory provides a mathematical framework for studying strategic interactions among self-interested decision-making agents, i.e., players. The Nash equilibrium (NE) is the central solution concept in game theory, describing a state where no player has an incentive to deviate \citep{ART:N50}. 
Over time, research has shifted from simply assuming that an NE exists and players will eventually play it, to understanding \textit{how} equilibrium is reached
\citep{smale_dynamics_1976,ART:PP19}. 
This shift has led to a focus on \textit{learning} in games, exploring how strategies evolve over time based on past outcomes, adopting a dynamical systems perspective \citep{BOO:FL98,BOOK:S10}. It has been shown
that game dynamics do not necessarily converge to NE but instead can display a variety of undesirable behaviors, including cycling, chaos, Poincaré recurrence, or convergence to suboptimal equilibria \citep{akin_evolutionary_1984,sato_chaos_2002, hart_uncoupled_2003,INP:MPP18,ART:MPPS23}. 
Motivated by these challenges, our primary objective in this work is to determine:
\vspace{-0.28cm}
\begin{center}
\parbox[c]{270pt}{
\centering
    \em{
        Can we steer game dynamics towards desirable outcomes?
    }
}
\end{center}
\vspace{-0.28cm}
To address this problem, we adopt the perspective of a central planner who seeks to influence player behaviour
by designing incentives.
Our goal is to achieve this with minimal effort, ensuring that the incentives are both cost-effective and efficient. More importantly, we operate in a setting with unknown game dynamics and limited observational data, reflecting real-world scenarios where information is often incomplete or uncertain. To tackle these challenges, we introduce a new computational framework called \ac{SIAR} with \ac{MPC} (\SIARMPC{}), designed to steer game dynamics by integrating cutting-edge techniques for real-time system identification and control. In the system identification step, we predict agents' reactions to incentives, which is especially challenging
for settings where observational data is limited, difficult to obtain, or costly. To address this problem, we extend the recently introduced ~\ac{SIAR} method \citep{sakos_discovering_2023},
which was developed to identify agents' learning dynamics from a short burst of a system trajectory. To compensate for the absence of data, \ac{SIAR} searches for polynomial regressors that approximate the dynamics, satisfying side-information constraints native to game theoretical applications. These constraints represent additional knowledge about agents' learning dynamics or assumptions about their behavior beyond the observed trajectory data.
To adapt \ac{SIAR} to our needs, we broaden its scope to incorporate the influence of control inputs, 
which represent the incentives designed by the central planner. This extension enables \ac{SIAR} to model the controlled dynamics, resulting in \ac{SIAR} with control (\SIARc).
%
\begin{figure}[htb]
\centering
    \includegraphics[width=0.39\textwidth]{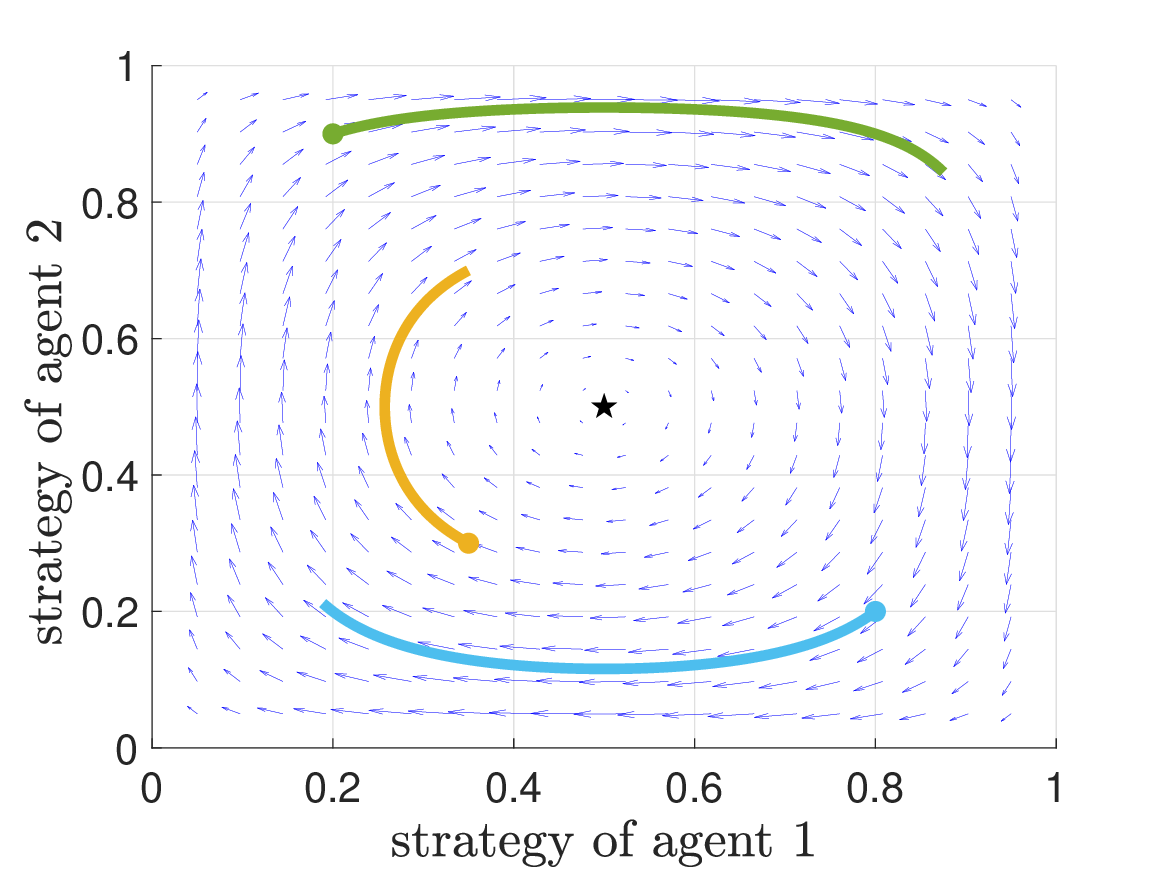}
    \includegraphics[width=0.39\textwidth]{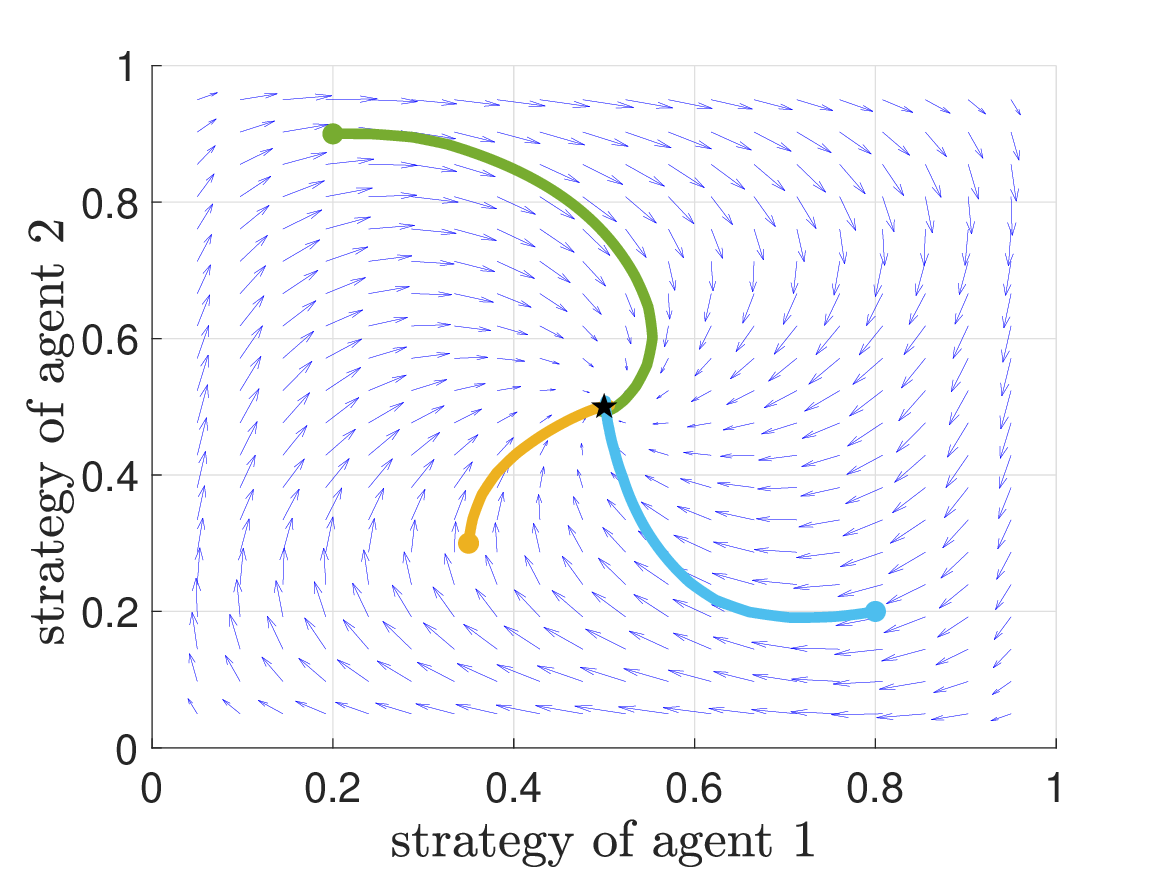}
    \caption{%
        Replicator dynamics trajectories in the matching pennies game with and without control.
        Starting from three initial conditions, \figl{} without control the system cycles around the equilibrium; \figr{} with control, trajectories are guided towards the specified equilibrium (indicated  by $\star$.)
    }
    \label{fig:MAtchingPenniesReplicatorDynamicsMPC}
\end{figure}

Once agent responses to incentives are modeled with \SIARc, we use \ac{MPC} in the subsequent control step to develop a dynamic incentive scheme that steers the system towards desirable outcomes (see \cref{fig:MAtchingPenniesReplicatorDynamicsMPC}  for an example illustrating  the impact of control). \ac{MPC} is a control technique that leverages a mathematical model of the system to predict future behavior and calculate optimal control inputs that minimize a given objective function \citep{BOO:CB07}.
A key advantage of MPC is its ability to handle input constraints, which is particularly relevant in our context since there are practical limits to the incentives that can be offered to agents. However, the effectiveness of \ac{MPC} depends on having an accurate system model, highlighting the importance of the system identification step for its successful application.~Our key contributions are:
\begin{itemize}
    \item \textbf{Framework for steering game dynamics:} We introduce the \SIARMPC{} framework for steering game dynamics towards desirable outcomes when the underlying agent behaviours are unknown and observational data are scarce.
    \item \textbf{Demonstration of performance across diverse game types:} We demonstrate the effectiveness of our framework across a diverse range of game types, from zero-sum games like Matching Pennies and Rock-Paper-Scissors to coordination games such as Stag Hunt. Our experimental results show that \SIARMPC{} can successfully steer system dynamics towards socially optimal equilibria and stabilize chaotic and cycling learning dynamics.
    \item \textbf{Demonstration of performance in a data-scarce setting:} We show that \SIARMPC{} consistently achieves convergence with low control costs in data-scarce settings characterized by limited learning samples, where other methods such as the unconstrained regression approach of \ac{SINDy} with control (\SINDYc{}) as well as \ac{PINN} (PINN) coupled with \ac{MPC} show less effective results.
\end{itemize}
\vspace{-10pt}
\section{Related Work}
\textbf{Control of Game Dynamics:} 
Recently, incentive-based control has been applied extensively in multi-player environments (see, e.g., \cite{ART:RRC18} and references therein). 
In addition, optimal control solutions have been given for specific evolutionary games and dynamics \citep{INP:PEWW18, ART:GYGC22, ART:MCL23}.
However, in the aforementioned works the player behavior is known, and to the best of our knowledge a setup where the game dynamics are not a priori known to the controller has only been explored in the recent works of \citet{misc:ZFA24} and \citet{ART:HTS24}.
\citet{misc:ZFA24} study the problem of steering no-regret agents in normal- and extensive-form games, under both full and bandit feedback.
However, in their setup the agents are adversarial and not bound to fixed dynamics, making their setup incompatible with system identification.
On the other hand, \citet{ART:HTS24} study this problem in the richer environment of Markov games under the additional assumption that the players' dynamics belong to some known finite class.
This assumption allows the controller to utilize simulators of the class of dynamics to optimize for controls that identify the exact update rule with high probability. In contrast, \acs{SIAR}-MPC does not rely on this assumption and instead takes advantage of the approximation guarantees of polynomial regression to acquire an accurate representation of an unknown system.
\vspace{4pt}
\noindent\textbf{System Identification:} The field of system identification encompasses a variety of methods, with a significant emphasis on data-driven approaches such as deep learning \citep{BOOK:BK19,INP:CSBX20}, symbolic regression \citep{ART:SL09,ART:UT20,INP:UTFN20}, and statistical learning \citep{ART:LZTM19}. A significant advancement in this field is the introduction of sparsity-promoting techniques,
most notably the \acf{SINDy}~(SINDy) method, which leverages the fact that the underlying dynamics can often be represented as a sparse combination of a set of candidate functions \citep{ART:BPK16,ART:KKB18}.
Despite their success, these data-driven methods rely on large amounts of data and often struggle to capture system dynamics accurately in data-scarce settings. Addressing this gap, \cite{ahmadi_learning_2023} explored the use of sum-of-squares (\acs{SOS}) optimization to identify dynamical systems from noisy observations of a few trajectories by incorporating contextual system information to compensate for data scarcity. 
This framework was used both in \acs{SINDy}-SI \citep{machado_sparse_2024}, an expansion of \acs{SINDy} that allows for 
polynomial nonnegativity constraints, and in the \ac{SIAR} method \citep{sakos_discovering_2023}, which learns game dynamics with few training data by leveraging
side-information constraints relevant to strategic agents. 
Physics Informed Neural Networks (PINNs) are
a
related method that
integrate knowledge of physical laws
into the training 
process
to
increase data efficiency
\citep{ART:RPK19,ART:CLS20,INP:SMC21}.

\section{Preliminaries}
In this work, we model a multi-agent system as a time-evolving normal-form game of $\noplayers$ players.
Each player~$i$ is equipped with a finite  set of strategies~$\pures_i$ of size~$\nopures_i$, and a time-varying reward function $\payoff_i : \pures \times \controls \to \reals$, where $\pures \equiv \prod_{i = 1}^{\noplayers} \pures_i$ is the game's strategy space, of the form
\vspace{-0.1cm}
\begin{equation}
    \payoff_i\paren{\pure, \control(t)} \\
        = \payoff_i(\pure, 0) + \control_{i, \pure}(t),
        \ \text{for $t \in \horizon$, $\pure \in \pures$}.
\label{eq:RewardFunction}
\end{equation}
The value $\control_{i, \pure}(t)$ denotes the control signal from the policy maker towards player~$i$ regarding the strategy profile~$\pure \eqdef \vect{\pure_1, \dots, \pure_\noplayers}$ at time~$t$, where $\pure_i \in \pures_i$.
We refer to the ensemble $\control_i(t) \eqdef \vect[\big]{\control_{i, \pure}(t)}_{\pure \in \pures}$ as the control signal of $i$ at time~$t$, and to $\control(t) \eqdef \vect[\big]{\control_i(t), \dots, \control_\noplayers(t)}$ as the system's control signal at $t$. We restrict the control signals $\control_i(t)$ in some semialgebraic sets $\controls_i$ and refer to the product $\controls \equiv \prod_{i = 1}^{\noplayers} \controls_i$ as the game's control space. The value $\payoff_i(\pure) \eqdef \payoff_i(\pure, 0)$ describes a time-independent base game, i.e., 
the reward of player~$i$ at strategy profile~$\pure$ in the absence of any control by the policy maker. The utilities 
$\payoff_i(\cdot)$, $i\!=\! 1, \dots, \noplayers$ will be considered common knowledge throughout the work. Altering the utilities is a natural method for influencing agent behavior as they encode all the information about the players' incentives in the game-theoretic model.

In addition to the above, each player~$i$ is allowed access to a set of mixed strategies $\actions_i \equiv \simplex(\pures_i)$, which is the $(\nopures_i -~1)$\nobreakdash-simplex that corresponds to the set of distributions over the pure strategies~$\pures_i$ of $i$.
The players' reward function naturally extends to the space of mixed strategy profiles $\actions \equiv \prod_{i = 1}^{\noplayers} \actions_i$ with $\payoff_i\paren[\big]{\action, \control} = \mean{\payoff_i(\pure, \control)}$ for all $\action \in \actions$ and $\control \in \controls$, where the expectation is taken with respect to the distributions $\action_1, \dots, \action_n$ over the player's pure strategies. Finally, we assume that the evolution of the above game is dictated by some controlled learning dynamics of the form
\vspace{-0.1cm}
\begin{equation}
\begin{alignedat}{2}
    \dot \action (t)
        &= \policy\paren{\action(t), \control(t)} 
            & \quad &\text{for $t \in \horizon$},\quad
    \action(0)
        &\in \actions,
\end{alignedat}
\label{eq:GameDynamics}
\end{equation}
where the update policies $\policy_i : \actions \times \controls_i \to \reals^{\nopures_i}$, $i = 1, \dots, \noplayers$ and the ensemble thereof, given by $\policy(\action, \control) \eqdef \vect[\big]{\policy_1(\action, \control_1), \dots, \policy_\noplayers(\action, \control_\noplayers)}$, are considered unknown, and are going to be discovered in the identification step of the framework described below.
Notice that the above assumption also implies that, at each time~$t$, the control signal~$\control_i(t)$ of player~$i$ is observed by that player, while the strategy profile~$\action(t)$ is observed by all the players.

Throughout this work, given a strategy profile~$\action$, we adopt the common game-theoretic shorthand $(\action_i, \action_{-i})$ to distinguish between the strategy of player~$i$ and the strategies of the other players.~Moreover, if $\action_i$ corresponds to a pure strategy $\pure_i$ of $i$, we write $(\pure_i, \action_{-i})$ to point to that fact.
\section{The SIAR-MPC Framework}
\label{sec:SIARMPC}
In this section, we describe the \SIARMPC{} framework for the real-time identification and control of game dynamics.
As outlined in the introduction, \SIARMPC{} involves two steps. 
First, the system identification step aims to approximate the controlled dynamics in (\ref{eq:GameDynamics}) using only a limited number of samples. 
Second, once the agents' reactions to payoffs are modeled,
the control step employs \ac{MPC} to steer the system towards a desirable outcome by optimizing specific objectives.
\vspace{-0.20cm}
\subsection{The System Identification Step}\label{sec:siarstep}
To model the controlled dynamics in (\ref{eq:GameDynamics}) we extend the \ac{SIAR} framework introduced in \cite{sakos_discovering_2023}, which was in turn motivated by recent results in data-scarce system identification \citep{ahmadi_learning_2023}.
\Ac{SIAR} relies on polynomial regression to approximate agents' learning dynamics of the form $\dot \action(t) = \policy\paren[\big]{\action(t)}$ based on a small number of 
observations $\action(t_k), \dot \action(t_k)$ (typically, $K = 5$ samples) taken along a short burst of a single system trajectory. 
To ensure the accuracy of the derived system model, \ac{SIAR} searches for polynomial regressors that satisfy side-information constraints native to game-theoretic applications, which serve as a regularization mechanism.

For our control-oriented scenario, we extend the \ac{SIAR} method to account for the influence of the control signal $\control(t)$. We refer to this extended method as \ac{SIAR} with control (\SIARc{}).
The aim of \SIARc{} is to model the controlled dynamics in (\ref{eq:GameDynamics}) for each agent~$i$ 
with
a polynomial vector field $p_i(\action, \control)$.
During the system identification phase,
we assemble a dataset $\action(t_k), \control(t_k), \dot \action(t_k)$, where $\action(t_k)$ represents a snapshot of the system state, $\control(t_k)$ is a randomly
generated input reflecting various possible incentives given to players (cf.~\eqref{eq:RewardFunction}), and $\dot \action(t_k)$ is the velocity at time $t_k$, which is obtained either through direct measurement (if possible) or estimated from the state variables.
The control input $\control(t_k)$ is typically taken to be normally distributed with mean-zero and low variance: the noise is used to create variety between the data samples for better system identification, while the low variance maintains a low aggregated control cost during the system identification phase.

The process of training the \SIARc{} model essentially involves solving an optimization problem to find a polynomial vector field that minimizes the mean square error relative to this dataset.
However, straightforward regression often yields suboptimal models due to the limited available samples. 
To overcome this challenge, we search over regressors that satisfy additional side-information constraints, encapsulating essential game-theoretic application features and refining the search for applicable models.
Formally, a generic \SIARc{} instance is given~by
\vspace{-0.1cm}
\begin{equation}
\begin{alignedat}{2}
    \underset{p_1, \dots, p_n}{\text{min}}& 
        &&\sum_{k = 1}^{K} \sum_{i = 1}^{n} \norm[\big]{p_i\paren[\big]{\action(t_k), \control(t_k)} - \dot \action_i(t_k)}^{2} \\
    \text{s.t.}& 
        \quad&&\text{$p_i$ are polynomial vector fields in $\action$ and $\control$} \\
        && &\text{$p_i$ satisfy side-information constraints}.
\end{alignedat}
\label{eq:SIARc}
\end{equation}
%
In this work, we utilize two specific types of side-information constraints (though a broader array is available; see, e.g., \cite{sakos_discovering_2023}).
The first side-information constraint ensures that the state space $\actions$ of the system
is robust forward invariant with respect to the controlled dynamics in (\ref{eq:GameDynamics}). 
This implies that, for any initialization $\action(0) \in \actions$, we have that $\action(t) \in \actions$ for all subsequent times $t > 0$, and for any control signal $\control(t) \in \controls$. 
To search over regressors that satisfy robust forward invariance~(\acs{RFI}), we rely on a specific characterization of the property that dictates a set remains robustly forward invariant under system (\ref{eq:GameDynamics}) only if $\policy(\action, \control)$ lies in the tangent cone of $\actions$ at $\action$ for every control $\control \in \controls$. 
Using the characterization of the tangent cone at each simplex $\actions_i$ \citep{nagumo_uber_1942}, enforcing \ac{RFI} is then reduced to verifying that, for all $\action \in \actions$ and $\control \in \controls$:
\vspace{-0.1cm}
\begin{equation}
\begin{alignedat}{2}
    \sum_{\pure_i \in \pures_i} p_{i \pure_i}(\action, \control)
        &= 0\quad \text{and}\quad p_{i \pure_i}(\action, \control)
        &\geq 0,
        & \quad &\text{whenever $\action_{i \pure_i} = 0$}.
\end{alignedat} 
\tag{\acs*{RFI}}
\label{eq:RobustForwardInvariance}
\end{equation}
The second side-information constraint is based on a fundamental assumption about agent behavior arising from their strategic nature. Agents, as strategic entities, are expected to act rationally, preferring actions that enhance their immediate benefits---a property known as positive correlation~(\acs{PC}) \citep{BOOK:S10}. 
Specifically, agents are inclined to choose actions that are likely to increase their expected utility, assuming other agents' behaviors remain unchanged, i.e., for all $\action \in \actions$ and $\control \in \controls$
\vspace{-0.1cm}
\begin{equation}
    \inner{\nabla_{\action_i} \payoff_i(\action, \control)}{p_i(\action, \control)}
        > 0,
        \; \text{whenever $p(\action, \control) \neq 0$}.
\tag{PC}
\label{eq:PositiveCorrelation}
\end{equation}
To enforce these side-information constraints computationally in our polynomial regression problem, we utilize \acs{SOS} optimization \citep{parrilo_structured_2000, prestel_positive_2001, lasserre_global_2001, parrilo_semidefinite_2003, lasserre_sum_2006}.
Both \ac{RFI} and \ac{PC} are represented as polynomial inequality or nonnegative constraints over the semialgebraic sets $\actions$ and $\controls$ (in \ac{PC}'s case, we need to relax the inequality \citep{sakos_discovering_2023}). 
Using the \ac{SOS} approach, instead of searching over polynomials $p(\action)$ that are nonnegative over a semialgebraic set $\aset \equiv \{\action\ |\ g_j(\action) \geq 0, h_\ell(\action) =~0,\, j \in [m],\, \ell \in [r]\}$, we search over polynomials $p(\action)$ that can be expressed as $p(\action) = \sigma_0(\action) + \sum_{j = 1}^{m} \sigma_j(\action) g_j(\action) + \sum_{\ell = 1}^{r} q_\ell(\action) h_\ell(\action)$ where $q_\ell$ are polynomials and $\sigma_j$ are SOS polynomials.
Such polynomials $p$ are guaranteed to be nonnegative over $\aset$, a condition that is also necessary under mild assumptions on the set $\aset$ \cite[Theorem 3.20]{laurent_sums_2008}. 
Furthermore, for any given degree $d$, we can look for \ac{SOS} certificates of degree $d$ through semidefinite programming, creating a hierarchy of semidefinite problems. Our theoretical justification for this system identification step comes from showing that the theorems of \cite{ahmadi_learning_2023} and \cite{sakos_discovering_2023}, which guarantee the usefulness of polynomial dynamics in approximation, can be readily extended to our current setting with control inputs:
\begin{theorem}[Informal]
\label{thm:SIARc_informal}
	Fix time horizon $T$, desired approximation accuracy $\epsilon > 0$, and desired side information accuracy $\delta > 0$. For any continuously differentiable dynamics $f$ satisfying side information constraints, there exists polynomial dynamics $p$ that $\delta$-satisfies the same side-information constraints and is $\epsilon$-close to $f$.
\end{theorem}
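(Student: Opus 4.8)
\emph{Proof sketch.} The plan is to rerun the density arguments behind the approximation theorems of \cite{ahmadi_learning_2023} and \cite{sakos_discovering_2023}, with the single change that the control variable~$\control$ is carried along as an extra, frozen coordinate: for polynomial approximation and for verifying the side-information certificates one works over the compact set $\actions \times \controls$ (compact because each~$\controls_i$ is a compact semialgebraic set), while the closed-loop dynamics stay non-autonomous in~$\action$ with~$\control$ acting as a time-varying parameter. Concretely, after extending~$\policy$ to a $C_1$ map on a fixed compact neighbourhood $\actions^{+} \times \controls$ of $\actions \times \controls$, the Stone--Weierstrass theorem supplies, for every $\eta > 0$, a polynomial vector field $p = \vect{p_1, \dots, p_\noplayers}$ with $\norm{p - \policy}_{C_1(\actions^{+} \times \controls)} \le \eta$; every estimate below is controlled by~$\eta$, which I would shrink only at the end.

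\emph{Step 1 (trajectory closeness).} I would fix any $\action(0) \in \actions$ and any admissible control signal $\wseq \in \wseqadmis$. On the compact domain~$\policy$ is Lipschitz in~$\action$ with a constant~$L$ uniform over $\control \in \controls$, and so is~$p$ with constant $L + \eta$; since~$\policy$ satisfies \eqref{eq:RobustForwardInvariance} exactly, its trajectory stays in~$\actions$, and a standard a~priori bound / continuation argument keeps the $p$-trajectory inside~$\actions^{+}$ on $[0, T]$ once~$\eta$ is small. Gr\"onwall's inequality then yields, for the two solutions $\action_p, \action_\policy$ issued from the same state under the same control,
\begin{equation*}
	\norm{\action_p(t) - \action_\policy(t)}
		\le \frac{\eta}{L}\paren[\big]{e^{LT} - 1}
		\qfa t \in [0, T],
\end{equation*}
with right-hand side independent of the initial state and of the chosen control; hence $\dist(\policy, p) \le \epsilon$ as soon as $\eta \le \epsilon L/(e^{LT}-1)$.

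\emph{Step 2 (near-satisfaction of the side information).} I would exploit that~$\policy$ satisfies the constraints exactly. For \eqref{eq:RobustForwardInvariance}, the identity $\sum_{\pure_i} \policy_{i\pure_i} \equiv 0$ forces $\abs{\sum_{\pure_i} p_{i\pure_i}} \le \nopures_i\,\eta$, and $\policy_{i\pure_i} \ge 0$ on the face $\{\action_{i\pure_i}=0\}$ forces $p_{i\pure_i} \ge -\eta$ there, so both parts hold up to~$\delta$ once~$\eta$ is small---and the equality part can even be kept exact by subtracting from each $p_{i\pure_i}$ the average $\nopures_i^{-1}\sum_{\pure_i'} p_{i\pure_i'}$, which at most doubles the $C_1$-error. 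For \eqref{eq:PositiveCorrelation}, each utility $\payoff_i$ is polynomial, so $\nabla_{\action_i}\payoff_i$ is bounded on $\actions \times \controls$ by some~$M$ and $\abs{\inner{\nabla_{\action_i}\payoff_i}{p_i} - \inner{\nabla_{\action_i}\payoff_i}{\policy_i}} \le M\eta$; on the compact region $\{\norm{\policy} \ge r\}$ the quantity $\inner{\nabla_{\action_i}\payoff_i}{\policy_i}$ is bounded below by some $c_r > 0$, so it stays positive for~$p$ once~$\eta \ll c_r$, while on $\{\norm{\policy} < r\}$ both fields are small ($\norm{p} < r + \eta$) and, choosing~$r$ and~$\eta$ appropriately in terms of~$\delta$, the $\delta$-relaxed version of \eqref{eq:PositiveCorrelation} used in \cite{sakos_discovering_2023} is met. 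Taking~$\eta$ below the minimum of the finitely many thresholds produced above then gives the required~$p$.

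\emph{Expected main obstacle.} The genuinely delicate point is \eqref{eq:PositiveCorrelation} in a neighbourhood of the zero set of~$\policy$---the equilibria of the uncontrolled dynamics---where the strict inequality degenerates and a naive perturbation bound is useless; this is exactly why the constraint is relaxed before being handed to the \acs{SOS} solver, and the formal version of the statement has to make that relaxation explicit and control the interplay of the zero sets of~$\policy$ and~$p$. A secondary, much milder issue is that the Gr\"onwall bound must be uniform over the infinite family $\wseqadmis$ of admissible control signals; this is automatic because the Lipschitz constant of~$p$ in~$\action$ is uniform over $\control \in \controls$ by compactness, so the bound depends only on sup-norms.
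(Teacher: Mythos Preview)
Your proposal is correct and follows the same high-level approach as the paper: Stone--Weierstrass over the compact set $\actions \times \controls$, a Gr\"onwall bound for trajectory closeness, and continuity estimates for the side information. A few packaging differences are worth noting. First, the paper's metric $\dist$ is defined only over triples $(t,\xinit,\wseq)$ for which \emph{both} trajectories remain in $\actions$ up to time~$t$, so no extension to a neighbourhood $\actions^{+}$ is needed, and $C^{0}$ (not $C^{1}$) approximation already suffices since only one of $f,p$ needs a Lipschitz constant in the Gr\"onwall step. Second, rather than treating \eqref{eq:RobustForwardInvariance} and \eqref{eq:PositiveCorrelation} separately, the paper writes every side-information constraint in the uniform form $\inner{q_S(\action,\control)}{f(\action,\control)} \geq 0$ for some fixed polynomial $q_S$, and a single Cauchy--Schwarz bound gives $\abs{L_S(p)-L_S(f)} \leq \normxw{q_S}\cdot\normxw{p-f}$; this absorbs both constraints at once and, because it works with the non-strict relaxation of \eqref{eq:PositiveCorrelation} from the outset, completely sidesteps the delicate analysis near the zero set of~$f$ that you (correctly) flag as the main obstacle. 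Third, the formal version of the theorem also asserts an SOS certificate for $\delta$-satisfiability, obtained by showing $\inner{q_S}{p}+\delta$ is strictly positive on $\actions\times\controls$ and then invoking Putinar's Positivstellensatz on an Archimedean description; this step is absent from your sketch but is immediate once the rest is in place.
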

\vspace{-0.05cm}
Here, $f$ and $p$ are $\epsilon$-close if for any initial point and 
sequence of controls $\wseq\!:\![0,T]\!\to \Omega$,
the distance between the trajectories and the distance between their velocities at any time $t\!\in\![0, T]$ are both upper bounded by $\epsilon$. $\delta$-satisfiability of the side information constraints indicates approximately satisfying them to some tolerance $\delta$ and has an SOS certificate, meaning that we can perform regression over the space of polynomials that $\delta$-satisfy the side-information constraints. A formal statement of this theorem, together with its proof, is presented in Appendix \ref{app:SIARc}.
\vspace{-0.1cm}
\subsection{The Control Step}
\vspace{-0.1cm}
After estimating the controlled dynamics which describes how players' strategies ($\action$) respond to the incentives ($\control$), our next goal is to steer the system towards desirable outcomes. 
To achieve this, we employ \ac{MPC}, which formulates an optimization problem to identify the optimal sequence of control actions over a defined horizon, subject to constraints on control inputs. In our context, these control constraints represent practical limits on the incentives that can be offered to agents.
\Ac{MPC} leverages a mathematical model of the system to predict future behavior over a specified prediction horizon~$T$. Each prediction is based on the current state measurement $\action(t)$ and a sequence of future control signals $\control_t \eqdef \{\control_{0|t}, \dots, \control_{N - 1|t}\} \subset \controls$, which is calculated by solving a constrained optimization problem. 
 Here, $N$ is the number of control steps within the control horizon $T \eqdef N \cdot \Delta t$ that determines the time period over which the control sequence is optimized. 
The optimal control sequence is  obtained by solving the constrained optimization problem
\vspace{-0.1cm}
\begin{equation}
\begin{alignedat}{2}
    \underset{\control_t}{\text{min}}& 
        &&\sum_{n = 0}^{N} \norm{\action_{n|t} - \action^{*}}^{2} + \alpha \sum_{n = 0}^{N - 1} \norm{\control_{n|t}}^{2}+\beta\sum_{n = 1}^{N} \norm{\control_{n|t} -\control_{n - 1|t}}^{2} \\
    \text{s.t.}& 
        \quad&&\control_{n|t} \in \controls,\, \  0 \leq n \leq N \\
        && &\action_{n + 1|t} = \action_{n|t} + \Delta t \cdot p(\action_{n|t}, \control_{n|t}),\, \  1 \leq n \leq N\\
        && &\action_{0|t} = \action(t).
\end{alignedat} 
\end{equation}
Here, $\action_{n|t}$, $n = 1, \dots, N$ correspond to the forecasted trajectory, and $\action^{*}$ is a desirable target state.
The first term of the objective function penalizes deviations of the predicted states $\action_{n|t}$ from the target value $\action^{*}$, the second term accounts for the control effort, and the final term penalizes large variations in control signal.
The first control signal $\control_{0|t}$ is then applied to the system and the optimization is repeated at time $t + \Delta t$ once the new state measurement $\action(t + \Delta t)$ is obtained. 

\vspace{-0.2cm}
\section{Experiments}\label{sec:exp}
\vspace{-0.1cm}
In this section, we evaluate the \SIARMPC{} framework across various types of normal-form games, each of independent interest,
and compare its performance to pairing \ac{MPC} with solutions obtained from \SINDYc{}
and \acs{PINN}.
The implementation of \SINDYc{} follows the approach described in \cite{ART:KKB18}, employing sequential least squares for solving the sparse regression problem as detailed in \cite{ART:BPK16}.~For \ac{PINN}, side-information constraints are integrated into the neural network training as terms in the loss function that penalize violations of the desired constraints. 
Given the limited number of training samples---in most cases $5$---we are restricted to a simple neural network architecture with two hidden layers of size $5$. This limitation in the neural network's expressivity is counterbalanced by the simplicity of the ground-truth update policies~$\policy$ (which, in most cases, are polynomial). As activation functions we use the $\tanh$ function and the side-information constraints are enforced using $2,500$ collocation points. 
Further details on the construction of the loss function and the generation of collocation points can be found in Appendix \ref{app:pinn}.
\begin{figure*}[b!]
\centering
    \includegraphics[width=\textwidth]{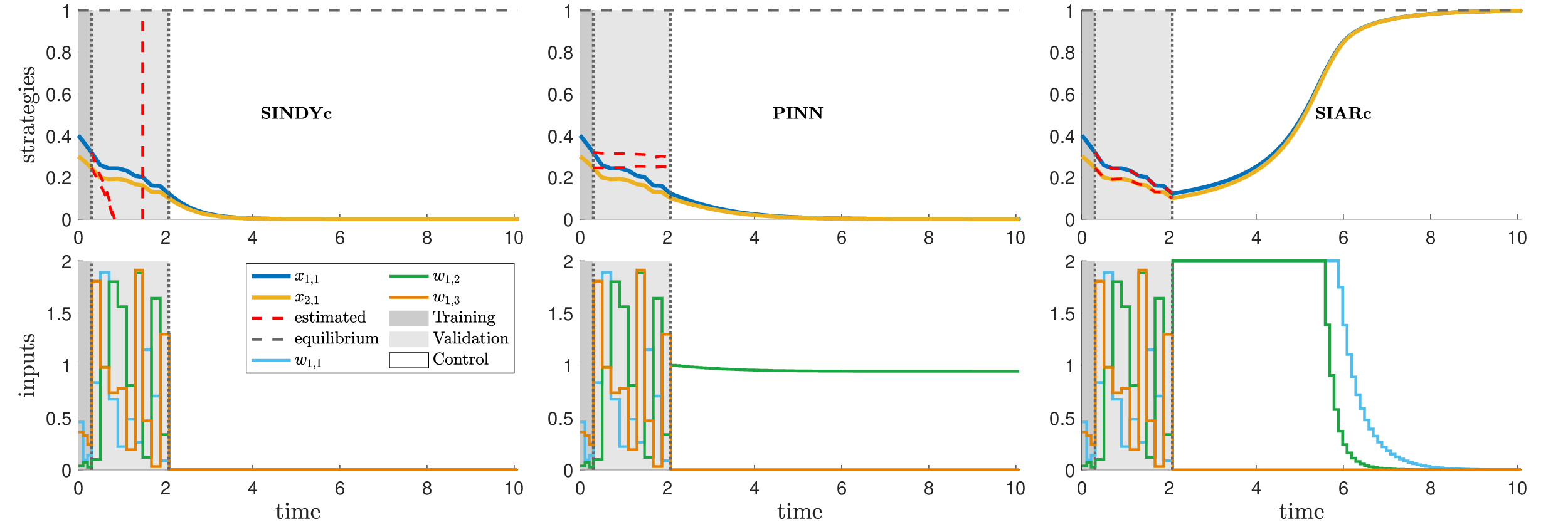}
    \caption{%
        Performance comparison of \SINDYMPC{}~\figl{}, \PINNMPC{}~\figc{}, and \SIARMPC{}~\figr{} in steering the replicator dynamics for the stag hunt game.
    }
\label{fig:ReplicatorDynamicsStagHunt}
\end{figure*}
\vspace{-0.3cm}
\subsection{Stag Hunt Game}\label{sec:exp_sh}
\vspace{-0.1cm}
The stag hunt game is a two-player, two-action coordination game that models a strategic interaction in which both players benefit by coordinating their actions towards a specific superior outcome (the hunt of a stag).
However, if that outcome is not possible, each player prefers to take advantage of the lack of coordination and come out on top of their opponent by choosing the alternative (hunt a rabbit by themselves) rather than coordinating to an inferior outcome (hunt a rabbit together).
While we perform experiments over various payoff matrices corresponding to stag hunt games in Section \ref{sec:exp_diff}, for concreteness in this section we consider the example where the players' reward functions $\payoff(\cdot) \eqdef \payoff(\cdot, 0)$ in the absence of control (see \eqref{eq:RewardFunction}) are given by
\vspace{-0.1cm}
\begin{equation}
    \payoff_1(\pure) 
        = \payoff_2(\pure)
        = \paymat_{\pure_1, \pure_2},
    \; \text{where} \; 
    \paymat = \begin{pmatrix}
        4 & 1 \\
        3 & 3
    \end{pmatrix}.
\label{eq:StagHuntGame}
\end{equation}
For expositional purposes, we restrict the game's evolution to a subset of symmetric two-player two-action games given by the control signals $\control_{1, \pure_1, \pure_2}(t) = \control_{2, \pure_2, \pure_1}(t) \in [0, 2]$ for all $t$.
Furthermore, for tractability purposes we fix $\control_{1, 2, 2}(t) \eqdef 0$.
We assume that
the agents' true behavior follow
the replicator dynamics given, for player~$i$ and action~$\pure_i$ of $i$, by the polynomial update policies
\vspace{-0.1cm}

\begin{equation}
    \policy_{i, \pure_i}\paren{\action, \control}
        = \action_{i, \pure_i} \paren{\payoff_i(\pure_i, \action_{-i}, \control) - \payoff_i(\action, \control)},\quad \text{for all}\quad \action \in \actions\quad \text{and}\quad \control \in \controls.
\label{eq:StagHuntReplicatorDynamics}
\end{equation}
We search for $\policy$ using the \SIARc{} framework.
Specifically, for each $i$  and $\pure_i$, we are going to search for polynomial $p_{i, \pure_i} : \actions \times \controls \to \reals$ such that $p_{i, \pure_i}\paren[\big]{\action(t), \control(t)} \approx \policy_{i, \pure_i}\paren[\big]{\action(t), \control(t)}$ for all $t \in \horizon$.
As side-information constraints, we are going to impose the \ac{RFI} and \ac{PC} properties as given in the previous sections.
Then, by substituting (\ref{eq:StagHuntGame}) to (\ref{eq:PositiveCorrelation}) we have the following \SIARc{} problem
\vspace{-0.2cm}
\begin{equation}
\begin{alignedat}{2}
    \underset{p}{\text{min}}& 
        &&\sum_{k = 1}^{K} \norm[\big]{p\paren[\big]{\action(t_k), \control(t_k)} - \dot \action(t_k)}^{2} \\
    \text{s.t.}& 
        \quad&&p_{i, 1}(\action, \control) + p_{i, 2}(\action, \control) = 0,\, \forall i \\
        && &p_{i, 1}\paren[\big]{\vect{0, 1}, \action_{-i}, \control} \geq 0,\, \forall i \\
        && &p_{i, 2}\paren[\big]{\vect{1, 0}, \action_{-i}, \control} \geq 0,\, \forall i \\
        && &v_{i, 1}(\action, \control) p_{i, 1}(\action, \control) + v_{i, 2}(\action, \control) p_{i, 2} \geq 0,\, \forall i,
\end{alignedat}
\label{eq:SIARStagHunt}
\end{equation}
where $\action \in \actions$, $\control \in \controls$, and $v_{i, \pure_i} : \actions \times \controls \to \reals$ are given by
\vspace{-0.1cm}
\begin{equation}
\begin{alignedat}{2}
    v_{1, 1}(\action, \control)
        &= (4 + \control_{1, 1}) \action_{2, 1} + (1 + \control_{1, 2}) \action_{2, 2}, \qquad \quad
    v_{1, 2}(\action, \control)
        &= (3 + \control_{2, 1}) \action_{2, 1} + 3 \action_{2, 2}, \\
    v_{2, 1}(\action, \control)
        &= (4 + \control_{1, 1}) \action_{1, 1} + (1 + \control_{1, 2}) \action_{1, 2}, \qquad \quad
    v_{2, 2}(\action, \control)
        &= (3 + \control_{2, 1}) \action_{1, 1} + 3 \action_{1, 2}.
\end{alignedat}
\end{equation}
Since the update policies~$\policy_{i, \pure_i}$ in (\ref{eq:StagHuntReplicatorDynamics}) correspond to the replicator dynamics, the solution to the above optimization problem can be recovered by a $7$\nobreakdash-degree \ac{SOS} relaxation \citep{sakos_discovering_2023}.
\Cref{fig:ReplicatorDynamicsStagHunt}~\figr{} shows the performance of the \SIARMPC{} using this solution as a model for the \ac{MPC} method.
In the top panel of the figure, we have the trajectory
$\action(t)$ initialized at $\action_0 = \vect{0.4, 0.3} $ corresponding to the control signal~$\control(t)$ depicted in the bottom panel.
The plot is divided into three sections corresponding to the system identification phase, an evaluation period, and a control phase.
In the first section, we set the control signals to normally distributed noise with mean zero (bounded in $\controls$) and a sample of $K = 4$ datapoints from the resulting trajectory.
At the end of this phase, we solve the optimization problem in (\ref{eq:SIARStagHunt}) and acquire a model of the system's update policies.
In the second section of the plot, we compare the ground-truth dynamics with the model's predicted trajectory (in dashed red lines); here, the control signal is chosen randomly.
Finally, in the steering phase, we steer the ground truth using the output of the \ac{MPC} as the control signal: the objective is to steer the system to the superior Nash equilibrium (NE) of the stag hunt game at $\action^{*}_{1, 1} = \action^{*}_{2, 1} = 1$, which is achieved by the \SIARMPC{} framework at $t \approx 8$.
In comparison, the \SINDYMPC{} and \PINNMPC{} solutions (\cref{fig:ReplicatorDynamicsStagHunt}~\figl{} and 
\cref{fig:ReplicatorDynamicsStagHunt}~\figc{})
fail to complete the steering objective.
%
\begin{figure*}[!]
\centering
    \includegraphics[width=0.497\textwidth]{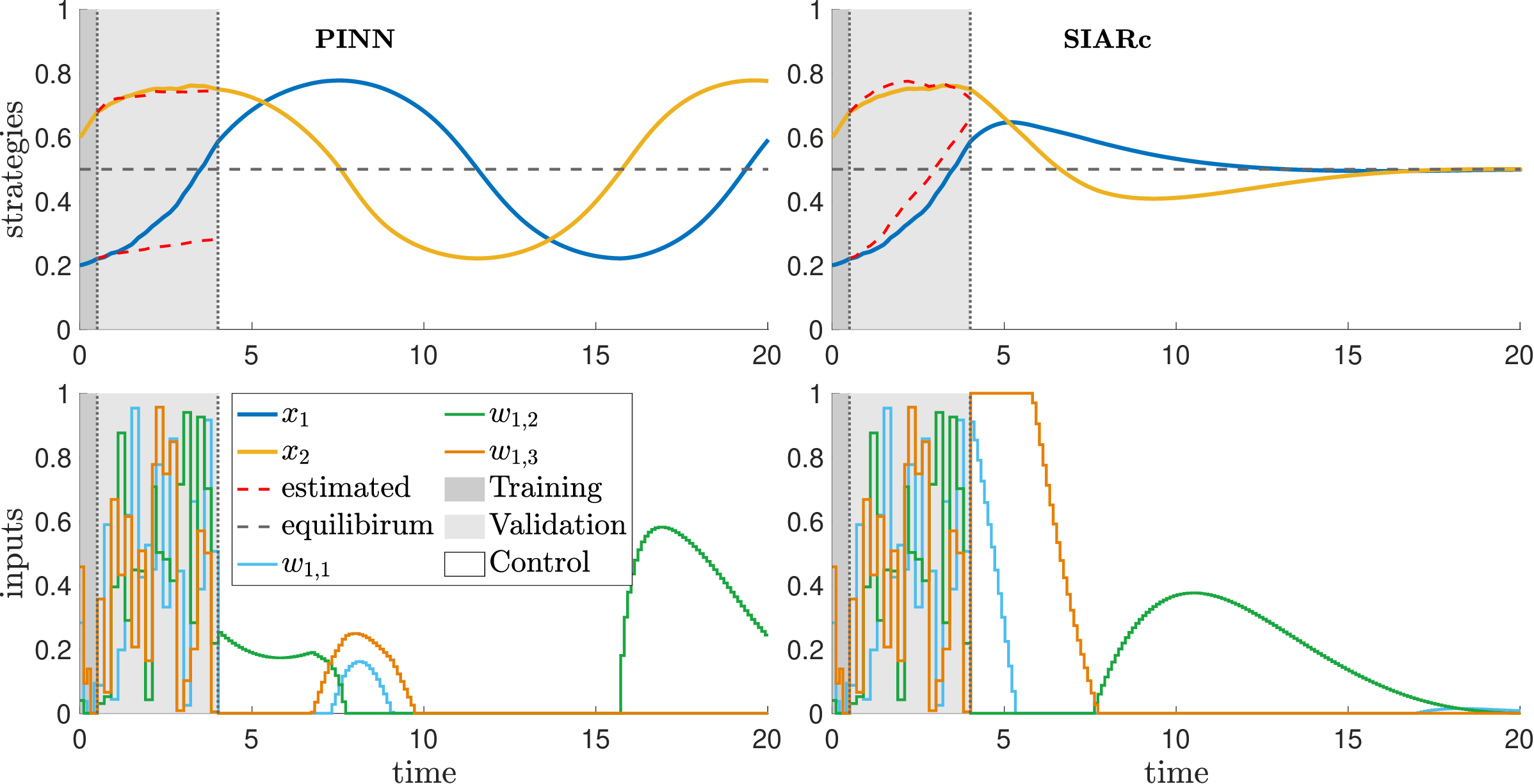}
    \includegraphics[width=0.497\textwidth]{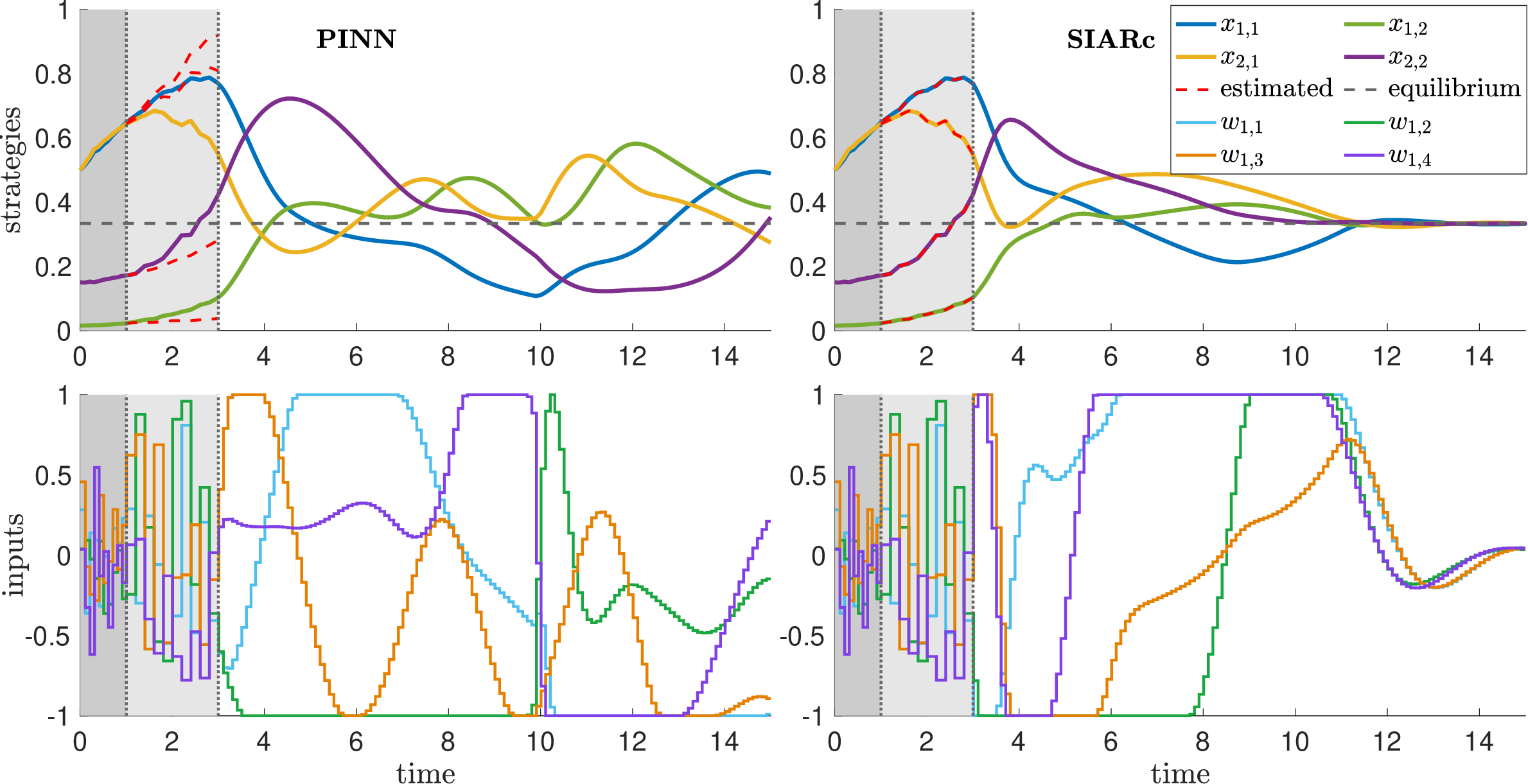}
    \caption{%
        Performance comparison of \PINNMPC{} and \SIARMPC{} steering the log-barrier dynamics for the matching pennies game~\figl{} and the replicator dynamics for a $0.25$\nobreakdash-RPS game~\figr{}.
    }
\label{fig:ZeroSumGames}
\end{figure*}
\vspace{-0.20cm}
\subsection{Zero-Sum Games \& Chaos}\label{sec:exp_zs}
The stag hunt game provides an ideal setting for steering game dynamics to a NE due to the existence of a socially optimal NE with a positive-measure basin of attraction. In contrast, the class of zero-sum games lacks this property. 
Well-known game dynamics, such as replicator and log-barrier, are known to exhibit undesirable behaviors in zero-sum games, including cycling (see \cref{fig:MAtchingPenniesReplicatorDynamicsMPC}) and chaos.
To address these challenges, we present two examples to demonstrate the performance of \SIARMPC{} in steering log-barrier dynamics
in the matching pennies game and chaotic replicator dynamics in an $\epsilon$\nobreakdash-perturbed rock-paper-scissors ($\epsilon$-RPS) game.
\paragraph{Matching Pennies Game}
The matching pennies game is a two-player, two-action zero-sum game where one player benefits by the existence of coordination among the two, while the other player benefits by the lack thereof.
Formally, a matching pennies game is encoded in the uncontrolled players' reward functions in (\ref{eq:RewardFunction}) by
\vspace{-0.3cm}
\begin{equation}
    \payoff_1(\pure) 
        = - \payoff_2(\pure)
        = \paymat_{\pure_1, \pure_2},
    \; \text{where} \;
    \paymat = \begin{pmatrix*}[r]
          1 & - 1 \\
        - 1 &   1
    \end{pmatrix*}.
\label{eq:MatchingPenniesGame}
\end{equation}
For similar reasons as in the previous example, we are going to restrict the game's evolution to a subset of two-player two-action zero-sum games given by $\control_{1, \pure}(t) = \control_{2, \pure}(t)\in [0, 1]$ for all $t$, and set $\control_{1, 2, 2}(t) \eqdef 0$.
The log-barrier dynamics of the above time-varying game are, for player~$i$ and action~$\pure_i$ of $i$, given by the rational update policies
\vspace{-0.2cm}
\begin{equation}
    \policy_{i, \pure_i}\paren[\big]{\action, \control}
        = \action_{i,\pure_i}^{2} \paren[\bigg]{\payoff_{i, \pure_i} - \frac{\action_{i, 1}^{2} \payoff_{i, 1} +\action_{i, 2}^{2} \payoff_{i, 2}}{\action_{i, 1}^{2} + \action_{i, 2}^{2}}}
\label{eq:MatchingPenniesLogBarrierDynamics}
\end{equation}
for all $t \in \horizon$ and $\action \in \actions$, where the shorthand $\payoff_{i, \action_j} \eqdef \payoff_i(\action_j, \action_{-i}, \control)$ is used for compactness.
Observe that, as is the case for the updated policies of the replicator dynamics in (\ref{eq:StagHuntReplicatorDynamics}), $\policy_i$ depends on $\control(t)$ through $\payoff_i$.
In \cref{fig:ZeroSumGames}~\figl{}, we show that \SIARMPC{} solution is able to steer a trajectory $\action(t)$ of the above system initialized at $\action(0) = \vect{0.2, 0.6}$ to the unique mixed NE $\action^{*}_{1, 1} = \action^{*}_{2, 1} = 1 / 2$ of the matching pennies game with only $K = 6$ training samples.

%
\paragraph{$\epsilon$-RPS Game}
In this example, we use the \SIARMPC{} method to steer the replicator dynamics in an $\epsilon$\nobreakdash-RPS game, a two-player, three-action zero-sum game where the replicator dynamics exhibit chaotic behavior \citep{sato_chaos_2002, hu_chaotic_2019}.
In a nutshell, this means that any two initialization of the system---even the ones that are infinitesimally close to each other---may lead to completely different trajectories.
In other words, the accurate estimation of the agents' update policies is futile due to the finite precision of any numerical method.
An $\epsilon$\nobreakdash-RPS game 
is encoded by 
\vspace{-0.15cm}
\begin{equation}
    \payoff_1(\pure) 
        = - \payoff_2(\pure)
        = \paymat_{\pure_1, \pure_2},
    \; \text{where} \;
    \paymat
        = \begin{pmatrix*}[r]
            \epsilon &      - 1 &        1 \\
                   1 & \epsilon &      - 1 \\
                 - 1 &        1 & \epsilon
        \end{pmatrix*}.
\end{equation}
%
We consider time-evolving games in the subset of two-player three-action zero-sum games given by $\control_{1, \pure}(t) = \control_{2, \pure}(t)\in [-1, 1]$ for all $t$, and only four non-zero signals, namely, $\control_{1, 1, 2}(t)$, $\control_{1, 1, 3}(t)$, $\control_{1, 2, 1}(t)$, and $\control_{1, 3, 1}(t)$. This restriction on the controllers to be nonzero only on a subset of the outcomes is to demonstrate steerability even with incomplete controls.
In \cref{fig:ZeroSumGames}~\figr{}, we show the successful steering of the chaotic replicator dynamics \eqref{eq:StagHuntReplicatorDynamics} in the $0.25$\nobreakdash-RPS game towards the unique mixed NE $\action^{*}_1 = \action^{*}_2 = \vect{1 / 3, 1 / 3, 1 / 3}$ with only $K = 11$ learning samples.

\vspace{-0.20cm}
\subsection{Experiments with different initializations and payoff matrices}
\label{sec:exp_diff}

Finally, we conduct simulations across a large 
set of settings to gain a statistically significant understanding of each methods' performance. 
Table \ref{tab:summary} presents the performance of each method across 100 initial conditions for the three examples discussed in Section \ref{sec:exp}. Table 2 shows the performance of each method under varying payoff matrices for stag hunt and zero-sum games. The results demonstrate that \SIARMPC{} consistently achieves lower error values and control cost compared to other methods. All results here have been averaged across the state variables for compactness; more details on the experiments can be found in Appendix \ref{app:exp}.
\vspace{-0.2cm}
\begin{table}[ht]
\centering
\resizebox{0.8\textwidth}{!}{%
\begin{tabular}{lccccccc}
\toprule
\textbf{Method} & \multicolumn{2}{c}{\textbf{Stag Hunt}} & \multicolumn{2}{c}{\textbf{Matching Pennies}} & \multicolumn{2}{c}{\textbf{$\epsilon$-RPS}} \\
\cmidrule(lr){2-3} \cmidrule(lr){4-5} \cmidrule(lr){6-7}
 & MSE (Ref.) & Cost & MSE (Ref.) & Cost & MSE (Ref.)& Cost  \\
\midrule
SIARc   & 3.48 $\times 10^{-2}$  & 5.02 $\times 10^1$ & 5.72 $\times 10^{-2}$ & 2.26 $\times 10^2$ & 9.66 $\times 10^{-3}$  & 2.68 $\times 10^1$ \\
PINN       & 5.62 $\times 10^{-1}$   & 1.09 $\times 10^3$ & 7.70 $\times 10^{-2}$  & 2.69 $\times 10^2$ &  2.94 $\times 10^{-2}$  & 1.03 $\times 10^2$ \\
SINDYc& 6.67 $\times 10^{-1}$  & 1.27 $\times 10^3$  & 2.01 $\times 10^{-1}$ & 3.00 $\times 10^9$   & 5.77 $\times 10^{-2}$  & 7.34 $\times 10^{34}$ \\
\bottomrule
\end{tabular}%
}
\caption{Performance comparison of SIAR-MPC, PINN-MPC, and SINDY-MPC across three games described in Section \ref{sec:exp_sh}-\ref{sec:exp_zs} averaged over 100 initial conditions, evaluated on \textbf{MSE(Ref.)}: mean squared error between the estimated and reference trajectories and \textbf{Cost}: accumulated control cost. 
}
\label{tab:summary}
\end{table}
\vspace{-0.8cm}
\begin{table}[ht]
\centering
\resizebox{0.56\textwidth}{!}{%
\begin{tabular}{lccccc}
\toprule
\textbf{Method} & \multicolumn{2}{c}{\textbf{Stag Hunt}} & \multicolumn{2}{c}{\textbf{$2\times2$ Zero-sum Games}} \\
\cmidrule(lr){2-3} \cmidrule(lr){4-5}
 & MSE (Ref.) & Cost & MSE (Ref.) & Cost\\
\midrule
SIARc   & 1.95 $\times 10^{-1}$  & 3.71 $\times 10^2$ & 1.88 $\times 10^{-2}$ & 7.24 $\times 10^1$ \\
PINN       & 5.63 $\times 10^{-1}$   & 1.08 $\times 10^3$ & 3.99 $\times 10^{-2}$  & 1.31 $\times 10^2$ \\
SINDYc& 5.13 $\times 10^{-1}$  & 1.70 $\times 10^7$  & 4.88 $\times 10^{-2}$ & 3.52 $\times 10^{37}$ \\
\bottomrule
\end{tabular}%
}
\caption{Performance comparison of SIAR-MPC, PINN-MPC, and SINDY-MPC in stag hunt and $2\times 2$ zero-sum games across 50 payoff matrices, based on the metrics described in Table \ref{tab:summary}.}
\label{tab:summary2}
\end{table}
\vspace{-0.7cm}
\section{Conclusion}
\vspace{-0.1cm}
In this work we introduced \SIARMPC{}, a new computational framework for steering game dynamics towards desirable outcomes with limited data.  \SIARMPC{} extends \ac{SIAR} for system identification of controlled game dynamics and integrates it with \ac{MPC} for dynamic incentive adjustments.
Our results demonstrated that \SIARMPC{} effectively steers systems towards optimal equilibria, stabilizes chaotic and cycling dynamics.
Future research can explore several potential directions.
First, we intend to address a broader question: given the inherent limitations of game dynamics, where convergence to NE is not always guaranteed, what are the necessary and sufficient conditions for achieving global stability in controlled game dynamics?
Second, we aim to extend our framework to encompass games beyond the normal form, thereby expanding its applicability to a wider range of strategic interactions.
Finally, we plan to investigate the scalability of our approach by exploring the uncoupling assumption, which could enable its application to larger systems with numerous agents.

\acks{This research is supported by the MOE Tier 2 grant (MOE-T2EP20223-0018); the National Research Foundation, Singapore, A*STAR under the Quantum Engineering Programme (NRF2021-QEP2-02-P05); and DSO National Laboratories under its AI Singapore Program (AISG2-RP-2020-016).}

\bibliography{L4DC/references,L4DC/references_J, L4DC/references_W}

\clearpage
\appendix
\section{Additional Experimental Results}\label{app:exp}
\subsection{Performance Across Multiple Initial Conditions}
To supplement the detailed analyses presented in Section \ref{sec:exp}, we conducted simulations across a wide range of diverse settings to obtain a statistically significant understanding of each methods' performance. For each setting, we generated 100 random initial conditions, and applied the \SIARMPC{}, \PINNMPC{} and \SINDYMPC{} frameworks. The first three columns in the tables below present different metrics used to assess the performance of each framework, focusing on tracking accuracy, error at final simulation time and cost efficiency. Additionally, we generated 100 trajectories using \SIARc{}, \ac{PINN} and \SINDYc{} to evaluate how accurately these models can replicate the true system dynamics.
\subsubsection{Stag Hunt Game}
We generated 100 initial conditions using Latin hypercube sampling and then applied \SIARMPC{}, \PINNMPC{} and \SINDYMPC{} to the stag hunt game described in Section \ref{sec:exp_sh}. 
\begin{table}[ht]
\centering
\resizebox{\textwidth}{!}{%
\begin{tabular}{lcccccccc}
\toprule
\textbf{Method} & \multicolumn{2}{c}{\textbf{MSE (Ref.)}} & \multicolumn{2}{c}{\textbf{Error at $T_\text{final}$}} & {\textbf{Cost}} & \multicolumn{2}{c}{\textbf{ MSE (True)}} \\
\cmidrule(lr){2-3} \cmidrule(lr){4-5} \cmidrule(lr){7-8}
 & $x_{1,1}$ & $x_{2,1}$ & $x_{1,1}$ & $x_{2,1}$ &  & $x_{1,1}$ & $x_{2,1}$ \\
\midrule
SIARc   & 3.41 $\times 10^{-2}$  & 3.54 $\times 10^{-2}$ & 2.26 $\times 10^{-4}$ & 1.60 $\times 10^{-4}$ & 5.02 $\times 10^{1}$ & 1.21 $\times 10^{-12}$  & 3.32 $\times 10^{-9}$ \\
PINN       & 5.65 $\times 10^{-1}$   & 5.59 $\times 10^{-1}$ & 6.55 $\times 10^{-1}$  & 6.55 $\times 10^{-1}$ & 1.09 $\times 10^3$ & 7.37 $\times 10^{-3}$  & 7.89 $\times 10^{-3}$ \\
SINDYc& 6.67 $\times 10^{-1}$  & 6.67 $\times 10^{-1}$  & 7.80 $\times 10^{-1}$ & 7.80 $\times 10^{-1}$ & 1.27 $\times 10^3$  & 1.05 $\times 10^3$  & 6.50 $\times 10^2$ \\
\bottomrule
\end{tabular}%
}
\caption{Results for the stag hunt game across 100 initial conditions. The first three metrics compare the performance of the SIAR-MPC, PINN-MPC and SINDY-MPC frameworks: \textbf{MSE(Ref.)} shows the mean squared error between the estimated and the reference trajectories; \textbf{Error at $T_\text{final}$} measures the absolute error between the system state $x(T_\text{final})$ and the reference state $x^*$ at the final simulation time, $T_\text{final}$; \textbf{Cost} represents the accumulated cost of steering the system to the reference state. The last metric, \textbf{MSE(True)}, evaluates the accuracy of the SIARc, PINN, and SINDYc models in predicting the true system dynamics by reflecting the average squared error between the estimated $\dot{x}$ and the true $\dot{x}$. All results are averaged across the initial conditions. The results clearly demonstrate that SIAR-MPC and SIARc consistently achieve lower error values and control cost compared to the other methods.}
\label{tab:SH}
\end{table}
\clearpage
\subsubsection{Matching Pennies Game}
We generated 100 initial conditions using Latin hypercube sampling. We then applied \SIARMPC{}, \PINNMPC{} and \SINDYMPC{} to the matching pennies game described in Section \ref{sec:exp_zs}.  However, as the latter two methods are data-driven techniques, for fairness we did the comparison based on a larger training dataset of $K = 50$ samples.
\begin{table}[ht!]
\centering
\resizebox{\textwidth}{!}{%
\begin{tabular}{lcccccccc}
\toprule
\textbf{Method} & \multicolumn{2}{c}{\textbf{MSE (Ref.)}} & \multicolumn{2}{c}{\textbf{Error at $T_\text{final}$}} & {\textbf{Cost}} & \multicolumn{2}{c}{\textbf{MSE (True)}} \\
\cmidrule(lr){2-3} \cmidrule(lr){4-5} \cmidrule(lr){7-8}
 & $x_{1,1}$ & $x_{2,1}$ & $x_{1,1}$ & $x_{2,1}$ &  & $x_{1,1}$ & $x_{2,1}$ \\
\midrule
SIARc   & 5.13 $\times 10^{-2}$  & 6.30 $\times 10^{-2}$ & 9.90 $\times 10^{-2}$ & 1.00 $\times 10^{-1}$ & 2.26 $\times 10^2$ & 6.25 $\times 10^{-4}$ & 1.32 $\times 10^{-3}$ \\
PINN       & 7.56 $\times 10^{-2}$  & 7.83 $\times 10^{-2}$ & 2.13 $\times 10^{-1}$  & 1.88 $\times 10^{-1}$ & 2.69 $\times 10^2$ & 6.50 $\times 10^{-3}$  & 6.26 $\times 10^{-3}$ \\
SINDYc & 9.62 $\times 10^{-2}$  & 1.05 $\times 10^{-1}$  & 2.71 $\times 10^{-1}$ & 2.90 $\times 10^{-1}$ & 3.00 $\times 10^9$  & 7.97 $\times 10^2$ & 2.31 $\times 10^5$ \\
\bottomrule
\end{tabular}%
}
\caption{Results for the matching pennies game across 100 initial conditions. We compare the performance of the SIARc, PINN and SINDYc models across the metrics described in Table \ref{tab:SH}. The results clearly demonstrate that SIARc consistently achieve lower error values and control cost compared to the other methods.}
\label{tab:MP}
\end{table}
\subsubsection{$\epsilon$-RPS Game}
We generated 100 initial conditions by uniformly sampling from the 3-dimensional simplex and then applied \SIARMPC{}, \PINNMPC{} and \SINDYMPC{} to the $\epsilon$-RPS game described in Section \ref{sec:exp_zs}. 

\begin{table}[ht!]
\centering
\resizebox{\textwidth}{!}{%
\begin{tabular}{lccccccccc}
\toprule
\textbf{Method} & \multicolumn{4}{c}{\textbf{MSE (Ref.)}} & \multicolumn{4}{c}{\textbf{Error at $T_\text{final}$}} & {\textbf{Cost}} \\
\cmidrule(lr){2-5} \cmidrule(lr){6-9}
 & $x_{1,1}$ & $x_{1,2}$ & $x_{2,1}$ & $x_{2,2}$ & $x_{1,1}$ & $x_{1,2}$ & $x_{2,1}$ & $x_{2,2}$ &  \\
\midrule
SIARc   & 9.73 $\times 10^{-3}$ & 1.09 $\times 10^{-2}$ & 1.01 $\times 10^{-2}$ &7.90 $\times 10^{-3}$  &3.03 $\times 10^{-3}$  &2.14 $\times 10^{-3}$  &3.14 $\times 10^{-3}$  &4.92 $\times 10^{-3}$  & 2.68 $\times 10^1$ \\
PINN       &3.50 $\times 10^{-2}$  &3.65 $\times 10^{-2}$  &1.71 $\times 10^{-2}$  &2.89 $\times 10^{-2}$  & 1.06 $\times 10^{-1}$ & 1.44 $\times 10^{-1}$ &5.98 $\times 10^{-2}$  &7.45 $\times 10^{-2}$  & 1.03 $\times 10^2$ \\
SINDYc      &7.14 $\times 10^{-2}$ &5.29 $\times 10^{-2}$  & 5.54 $\times 10^{-2}$ & 5.11 $\times 10^{-2}$ &2.39 $\times 10^{-1}$  &2.10 $\times 10^{-1}$  &1.92 $\times 10^{-1}$  & 2.06 $\times 10^{-1}$ & 7.04 $\times 10^{34}$ \\
\bottomrule
\end{tabular}%
}
\caption{Results for the $\epsilon$-RPS game for 100 initial conditions. We compare the performance of the SIARc, PINN and SINDYc models across the metrics MSE (Ref.), Error at $T_\text{final}$, and Cost as described in Table \ref{tab:SH}. The results clearly demonstrate that SIARc consistently achieve lower error values and control cost compared to the other methods.}
\label{tab:RPS_part1}
\end{table}
\begin{table}[ht!]
\centering
\scalebox{0.9}{ 
\begin{tabular}{lcccc}
\toprule
\textbf{Method} & \multicolumn{4}{c}{\textbf{MSE (True)}} \\
\cmidrule(lr){2-5}
 & $x_{1,1}$ & $x_{1,2}$ & $x_{2,1}$ & $x_{2,2}$ \\
\midrule
SIARc   & 1.18 $\times 10^{-7}$ & 2.48 $\times 10^{-7}$ & 1.02 $\times 10^{-9}$ & 3.08 $\times 10^{-10}$ \\
PINN       & 1.31 $\times 10^{-2}$ & 1.25 $\times 10^{-2}$ & 9.05 $\times 10^{-3}$ & 1.25 $\times 10^{-2}$ \\
SINDYc      & 7.48 $\times 10^7$ & 2.92 $\times 10^4$ & 1.42 $\times 10^8$ & 1.95 $\times 10^6$ \\
\bottomrule
\end{tabular}%
}
\caption{Results for the $\epsilon$-RPS game for 100 initial conditions. We compare the performance of the SIARc, PINN and SINDYc models in terms of MSE (True) as described in Table \ref{tab:SH}. The results clearly demonstrate that SIARc consistently achieve lower error values compared to the other methods.}
\label{tab:RPS_part2}
\end{table}
\subsection{Performance Across Varying Payoff Matrices}
We conducted additional simulations to demonstrate the performance of the proposed \SIARMPC{} method when the payoff matrix is altered. 
\subsubsection{Stag Hunt Game}
We generated 50 random payoff matrices while ensuring that the game maintained its cooperative nature with a socially optimal equilibrium. We evaluated the performance of \SIARMPC{}, \PINNMPC{} and \SINDYMPC{} frameworks across these 50 systems based on our performance metrics detailed in Table \ref{tab:SH}.
\begin{table}[ht]
\centering
\resizebox{\textwidth}{!}{%
\begin{tabular}{lcccccc}
\toprule
\textbf{Method} & \multicolumn{2}{c}{\textbf{MSE (Ref.)}} & \multicolumn{2}{c}{\textbf{Error at $T_\text{final}$}} & {\textbf{Cost}} \\
\cmidrule(lr){2-3} \cmidrule(lr){4-5} 
 & $x_{1,1}$ & $x_{2,1}$ & $x_{1,1}$ & $x_{2,1}$ &   \\
\midrule
SIARc   & 2.10 $\times 10^{-1}$  & 1.80 $\times 10^{-1}$ & 1.80 $\times 10^{-1}$ & 1.80 $\times 10^{-1}$ & 3.71 $\times 10^{2}$ \\
PINN       & 5.78 $\times 10^{-1}$   & 5.47 $\times 10^{-1}$ & 5.80 $\times 10^{-1}$  & 5.80 $\times 10^{-1}$ & 1.08 $\times 10^3$ \\
SINDYc& 5.38 $\times 10^{-1}$  & 4.87 $\times 10^{-1}$  & 5.52 $\times 10^{-1}$ & 5.36 $\times 10^{-1}$ & 1.70 $\times 10^7$  \\
\bottomrule
\end{tabular}%
}
\caption{Results for the stag hunt game for 50 different payoffs. We compare the performance of the SIARc, PINN and SINDYc models across the metrics MSE (Ref.), Error at $T_\text{final}$, and Cost as described in Table \ref{tab:SH}. The results clearly demonstrate that SIARc consistently achieve lower error values and control cost compared to the other methods.}
\label{tab:SH2}
\end{table}

\subsubsection{$2 \times 2$ Zero-Sum Games}
For this experiment, 50 random payoff matrices were generated such that each zero-sum game possessed a unique mixed NE in the interior of the strategy space. The objective was to steer the system towards this interior NE. We evaluated the performance of \SIARMPC{}, \PINNMPC{} and \SINDYMPC{} frameworks across these 50 systems based on our performance metrics detailed in Table \ref{tab:SH}.
\begin{table}[ht!]
\centering
\resizebox{\textwidth}{!}{%
\begin{tabular}{lcccccc}
\toprule
\textbf{Method} & \multicolumn{2}{c}{\textbf{MSE (Ref.)}} & \multicolumn{2}{c}{\textbf{Error at $T_\text{final}$}} & {\textbf{Cost}}\\
\cmidrule(lr){2-3} \cmidrule(lr){4-5}
 & $x_{1,1}$ & $x_{2,1}$ & $x_{1,1}$ & $x_{2,1}$ & \\
\midrule
SIARc   & 1.82 $\times 10^{-2}$  & 1.94 $\times 10^{-2}$ & 2.67 $\times 10^{-2}$ & 4.41 $\times 10^{-2}$ & 7.24 $\times 10^1$  \\
PINN       & 3.88 $\times 10^{-2}$  & 4.10 $\times 10^{-2}$ & 1.26 $\times 10^{-1}$  & 1.49 $\times 10^{-1}$ & 1.31 $\times 10^2$ \\
SINDYc & 5.51 $\times 10^{-2}$  & 4.20 $\times 10^{-2}$  & 1.77 $\times 10^{-1}$ & 1.71 $\times 10^{-1}$ & 3.52 $\times 10^{37}$ \\
\bottomrule
\end{tabular}%
}
\caption{Results for 50 distinct $2 \times 2$ zero-sum games. We compare the performance of the SIARc, PINN and SINDYc models across the metrics MSE (Ref.), Error at $T_\text{final}$, and Cost as described in Table \ref{tab:SH}. The results clearly demonstrate that SIARc consistently achieve lower error values and control cost compared to the other methods.}
\label{tab:MP2}
\end{table}
\clearpage
\subsection{State Avoidance Constraints}

In this section, we revisit the example of a matching pennies game to illustrate a key feature of \ac{MPC}: its ability to incorporate state constraints \citep{BOO:CB07}.
In our case, the purpose of these state constraints is to keep the system trajectory away from specific, undesirable areas of the state space.
Consider a scenario where the policymaker aims to steer the system towards the mixed NE at $\action^{*}_{1, 1} = \action^{*}_{2, 1} = 1 / 2$, while simultaneously ensuring that the system trajectory avoids a ball of radius $0.4$ centered at $[\vect{0, 1}, \vect{1, 0}]$. 
This constraint can be incorporated into the \ac{MPC} problem and as illustrated in \cref{fig:ReplicatorDynamicsMatchingPenniesSetAvoidance}, \SIARMPC{} can steer the system to the desired equilibrium while also avoiding the restricted area of the state space (shaded in blue).
\begin{figure}[htb!]
    \centering
    \includegraphics[width=0.485\textwidth]{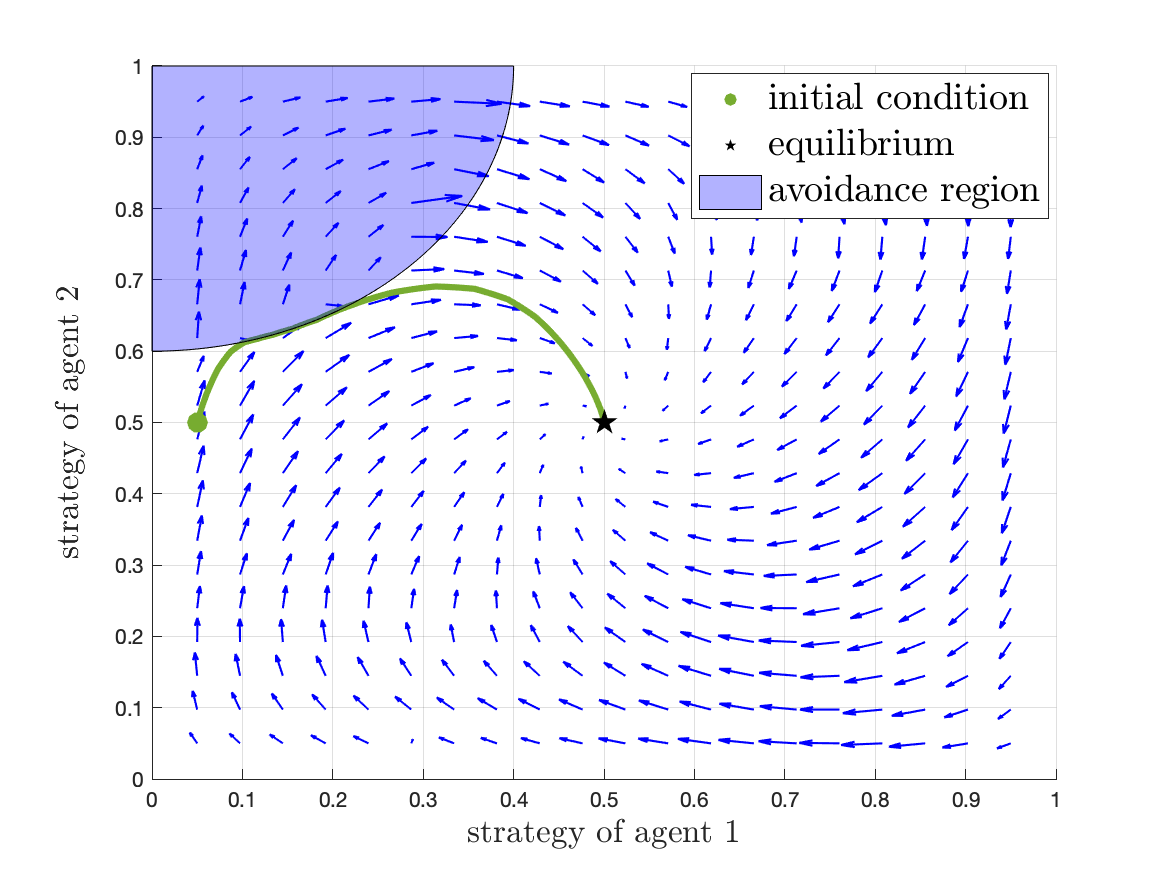}
    \includegraphics[width=0.485\textwidth]{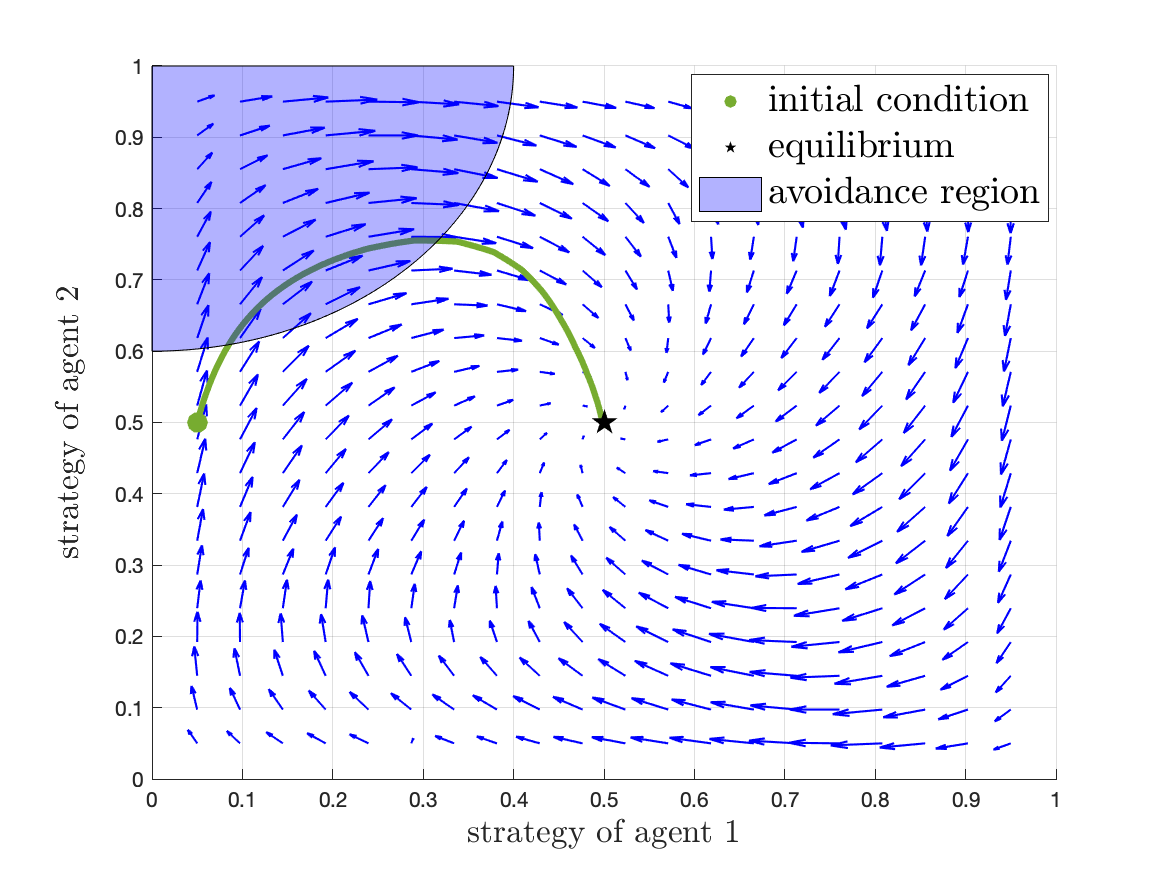}
    \caption{%
        Vector field of the steering direction of the \SIARMPC{} when the additional state avoidance constraint is~\figl{}, and is not~\figr{} imposed.
        The restricted area is shaded in blue.
        The green line corresponds to a controlled trajectory of the replicator dynamics for the matching pennies~game.
    }
\label{fig:ReplicatorDynamicsMatchingPenniesSetAvoidance}
\end{figure}
\clearpage
\section{Details of \ac{PINN}}\label{app:pinn}
Consider a two-player two-action game. Given that each player has two actions, the state space can be reduced to $(x_1,x_2)\in[0,1]\times[0,1]$, where $x_1$ and $x_2$ represent the first state of each player\footnote{Note that the second state of each player is simply $1-x_i$ for $i={1,2}$, and thus, the corresponding dynamics are given by $-f_i$ for each $i={1,2}$.}. We train a neural network to approximate the dynamics $f_1$ and $f_2$ such that
$$\dot{x}_i=f_i(x_1,x_2,\omega)\approx p_i(x_1,x_2,\omega)=\dot{\hat{x}}_i,\ i={1,2},$$ where $\omega\in\Omega.$ The neural network takes $x_1,x_2,\omega$ as inputs and outputs an approximation $\dot{\hat{x}}_i$ of $\dot{x}_i$. The training dataset is generated as described in Section \ref{sec:siarstep} and is represented as $D:=\{(x_{1,i},x_{2,i},\omega_{i},\dot{x}_{1,i},\dot{x}_{2,i})\}^{N_D}_{i=1}$. 
\subsection{Physics Information and Collocation Dataset Generation}
To incorporate the physical knowledge into the \ac{PINN} training, we consider two sets of side information constraints: Robust Forward Invariance (\ac{RFI}) and Positive Correlation (\ac{PC}).\\ 

\noindent\textit{Robust Forward Invariance}\\
For the reduced state space, the \ac{RFI} constraint can be written as:
\begin{align*}
&p_1(0,x_2,\omega)\geq0,\quad \forall x_2\in[0,1],\ \omega\in\Omega\\
&p_1(1,x_2,\omega)\leq0,\quad \forall x_2\in[0,1],\ \omega\in\Omega\\
&p_2(x_1,0,\omega)\geq0,\quad \forall x_1\in[0,1],\ \omega\in\Omega\\
&p_2(x_1,1,\omega)\leq0,\quad \forall x_1\in[0,1],\ \omega\in\Omega
\end{align*}
To enforce this side information constraint during training, we generate a set of $N_C^\text{RFI}$ collocation points. Specifically, we create the following sets:
\begin{align*}
C^\text{RFI}_1:=\{(x_{1,i},x_{2,i},\omega_i)\}_{i=1}^{N_{C_1}^{\text{RFI}}}\ \text{with}\ x_{1,i}=0,\ x_{2,i}\sim U[0,1]\ \text{and}\ \omega_i\sim U\Omega\\
C^\text{RFI}_2:=\{(x_{1,i},x_{2,i},\omega_i)\}_{i=1}^{N_{C_2}^{\text{RFI}}}\ \text{with}\ x_{1,i}=1,\ x_{2,i}\sim U[0,1]\ \text{and}\ \omega_i\sim U\Omega\\
C^\text{RFI}_3:=\{(x_{1,i},x_{2,i},\omega_i)\}_{i=1}^{N_{C_3}^{\text{RFI}}}\ \text{with}\ x_{2,i}=0,\ x_{1,i}\sim U[0,1]\ \text{and}\ \omega_i\sim U\Omega\\
C^\text{RFI}_4:=\{(x_{1,i},x_{2,i},\omega_i)\}_{i=1}^{N_{C_4}^{\text{RFI}}}\ \text{with}\ x_{2,i}=1,\ x_{1,i}\sim U[0,1]\ \text{and}\ \omega_i\sim U\Omega
\end{align*}
where $x\sim U\mathcal{S}$ denotes that $x$ is sampled uniformly from set $\mathcal{S}$.
The combined set of collocation points for \ac{RFI} is then $C^\text{RFI}=C^\text{RFI}_1\cup C^\text{RFI}_2\cup C^\text{RFI}_3\cup C^\text{RFI}_4.$\\

\noindent\textit{Positive Correlation}\\
For the reduced state space, the \ac{PC} constraints can be written as:
\begin{align*}
&\inner{\nabla_{x_1} \payoff_1(x_1,x_2, \control)}{p_1(x_1,x_2, \control)}\geq0, \forall  x_1,x_2\in[0,1],\ \omega\in\Omega\\
&\inner{\nabla_{x_2} \payoff_2(x_1,x_2,, \control)}{p_2(x_1,x_2,, \control)}\geq0, \forall  x_1,x_2\in[0,1],\ \omega\in\Omega
\end{align*}
To enforce this side information constraint during training, we generate a set of $N_C^\text{PC}$ collocation points. Specifically, we create the following set:
$$C^\text{PC}:=\{(x_{1,i},x_{2,i},\omega_i)\}_{i=1}^{N_{C}^{\text{PC}}}\ \text{with}\ x_{1,i},\ x_{2,i}\sim U[0,1]\ \text{and}\ \omega_i\sim U\Omega.$$

\subsection{\ac{PINN} Loss Function}
The loss function for PINN integrates three components: a loss function for the supervised learning over dataset $D$ and two physics-informed loss functions over the collocation points $C^\text{RFI}$ and $C^\text{PC}$ to enforce side information constraints \ac{RFI} and \ac{PC}.\\

\noindent\textit{Supervised Loss}\\
The supervised loss is the standard squared error loss to ensure neural netwrok predictions align closely with the training data and is given as: 
\begin{align*}
\ell^0 &= \sum_{(x_1,x_2,\omega,\dot{x}_1,\dot{x}_2)\in D} (p_1(x_1,x_2, \omega) - \dot{x}_1)^2+(p_2(x_1,x_2, \omega) - \dot{x}_2)^2
\end{align*}\\

\noindent\textit{Physics Loss}\\
The physics loss for \ac{RFI} is defined as:
\begin{align*}
\ell^{\text{RFI}} = \sum_{(x_1,x_2,\omega) \in C^{\text{RFI}}_1} &\max(0, -p_1(x_1,x_2, \omega))+ \sum_{(x_1,x_2,\omega) \in C^{\text{RFI}}_2} \max(0, p_1(x_1,x_2, \omega))\\& + \sum_{(x_1,x_2,\omega) \in C^{\text{RFI}}_3} \max(0, -p_2(x_1,x_2, \omega))+ \sum_{(x_1,x_2,\omega) \in C^{\text{RFI}}_4} \max(0, p_2(x_1,x_2, \omega)).
\end{align*}
Similarly, the physics loss for the \ac{PC} constraint is defined as:
\begin{align*}
\ell^{\text{PC}} = &\sum_{(x_1,x_2, \omega) \in C^{\text{PC}}} \max\big(0, -\nabla_{x_1} \payoff_1(x_1,x_2, \control) \cdot p_1(x_1,x_2, \omega)\big) \\
&+ \sum_{(x_1,x_2, \omega) \in C^{\text{PC}}} \max\big(0, -\nabla_{x_2} \payoff_2(x_1,x_2, \control)\cdot p_2(x_1,x_2, \omega)\big).
\end{align*}
Then, the loss function for PINN with a weighted summation is given as:
\begin{align*}
\ell=\lambda_0\ell^0+\lambda_{\text{RFI}}\ell^\text{RFI}+\lambda_{\text{PC}}\ell^\text{PC}.
\end{align*}

\section{SIARc Theory}
\label{app:SIARc}

In this section, we state a formal version of Theorem \ref{thm:SIARc_informal} in the main text and provide its proof. We note that the proof is a straightforward extension of the proof given in \cite{ahmadi_learning_2023} to the setting of control ODEs; nevertheless, we write it here in full for completeness and also for cleaner exposition of certain aspects of the side information constraints and SOS certification that are specific to our setting.

The proof of the theorem depends on first using the Stone-Weierstrass theorem to find polynomial dynamics that is pointwise close to the given (true) dynamics, then using this fact to bound both (i) its distance to the true dynamics in terms of what we are interested in and (ii) how well the polynomial dynamics satisfies the side information. For organizational clarity we split the proof of the theorem into three subsections: in subsections \ref{appsec:SIARc_dist} and \ref{appsec:SIARc_sat} we deal with bounds (i) and (ii) respectively, while in subsection \ref{appsec:SIARc_thm} we state the theorem formally and complete its proof.

\subsection{Bounding the Distance between Dynamics}
\label{appsec:SIARc_dist}

Let $\X \subset \R^n$ be a compact state space, and $\Omega \subset \R^m$ be a compact space of viable controls. Let $\contxw$ be the space of continuously differentiable functions $f: \X \times \Omega \to \R^n$. Suppose that the dynamics of the true system are governed by a vector field $f \in \contxw$, i.e.,
\[
	\dot{x}(t) = f(x(t), \omega(t)),
\]
and let $x_{f,\wseq}(t, \xinit)$ denote the trajectory starting from $\xinit \in \X$ and following the dynamics of $f$, with control sequence $\wseq: [0, t] \to \Omega$.

We have the following notion of distance between any two vector fields $f,g \in \contxw$:
\begin{equation}
\label{dist}
	\dist(f,g) :=
	\sup_{(t, \xinit, \wseq) \in \txwadmis} 
	\max\{
		\norm{x_{f,\wseq}(t, \xinit) - x_{g,\wseq}(t, \xinit)}_2, \,
		\norm{\dot{x}_{f,\wseq}(t, \xinit) - \dot{x}_{g,\wseq}(t, \xinit)}_2
	\}
\end{equation}
where
\begin{equation}
\label{traj_space}
	\txwadmis 
	:=
	\{
		(t, \xinit, \wseq) \in [0, T] \times \X \times \wseqadmis:
		\xfw(s, \xinit), \xgw(s, \xinit) \in \X \fa s \in [0, t]
	\}
\end{equation}
and $\wseqadmis$ is the set of functions from $[0, T] \to \Omega$ that are continuous except at a finite number of discontinuities (the set of admissible control sequences).\footnote{$\wseqadmis$ can be chosen more generally so that all the integrals in the proof of Theorem \ref{thm:vecfielddist_bound} are integrable. That $\wseq$ is continuous except at a finite number of discontinuities is sufficient for Riemann integrability due to continuity of $f,g$ in $\omega$.}

As a ready extension of the Stone-Weierstrass theorem to vector-valued functions, we have the following lemma:

\begin{lemma}[see e.g., \cite{stone1948generalized}]
	For any compact sets $\X \subset \R^n$ and $\Omega \subset \R^m$, scalar $\epsilon > 0$, and continuous function $f: \X \times \Omega \to \R^n$, there exists a vector polynomial function $p : \R^n \times \R^m \to \R^n$ (i.e., $p(x, \omega) = (p_1(x, \omega), \ldots, p_n(x, \omega))$ where each $p_i$ is a polynomial in $x, \omega$) such that
	\[
		\norm{f-p}_{\X \times \Omega}
		\equiv
		\max_{\substack{x \in \X \\ \omega \in \Omega}}
		\norm{f(x, \omega) - p(x, \omega)}_2
		\leq
		\epsilon.
	\]
\end{lemma}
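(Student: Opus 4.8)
The plan is to reduce this vector-valued statement to the classical scalar Stone–Weierstrass theorem applied coordinate by coordinate. First I would observe that $\X \times \Omega$ is a compact subset of $\R^{n+m}$, being the product of the compact sets $\X \subset \R^n$ and $\Omega \subset \R^m$; hence the space of real-valued continuous functions on it, equipped with the sup-norm, is the natural setting for the classical theorem. Since $f = (f_1, \dots, f_n)$ is continuous, each coordinate function $f_i : \X \times \Omega \to \R$ is itself continuous, so the problem splits into $n$ independent scalar approximation tasks.

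The second step is to invoke the scalar Stone–Weierstrass theorem on each coordinate. The set of real polynomials in the $n+m$ variables $(x, \omega)$ forms a subalgebra of the continuous functions on $\X \times \Omega$ that contains the constants and separates points (the coordinate monomials already distinguish any two distinct points of $\X \times \Omega$), and is therefore dense in the sup-norm. Applying this to each $f_i$ with target accuracy $\epsilon / \sqrt{n}$ yields polynomials $p_i$ in $(x, \omega)$ with $\max_{(x, \omega) \in \X \times \Omega} \abs{f_i(x, \omega) - p_i(x, \omega)} \le \epsilon / \sqrt{n}$.

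Finally I would assemble $p \eqdef (p_1, \dots, p_n)$ and convert the coordinatewise bounds into the stated Euclidean bound: for every $(x, \omega) \in \X \times \Omega$,
\[
	\norm{f(x, \omega) - p(x, \omega)}_2
	= \Big( \textstyle\sum_{i=1}^{n} \paren{f_i(x,\omega) - p_i(x,\omega)}^2 \Big)^{1/2}
	\le \Big( n \cdot (\epsilon / \sqrt{n})^2 \Big)^{1/2}
	= \epsilon,
\]
and taking the maximum over $(x, \omega)$ gives $\norm{f - p}_{\X \times \Omega} \le \epsilon$, as required. There is no genuine obstacle here: the entire analytic content is carried by the scalar theorem, and the only point requiring care is the elementary bookkeeping that turns $n$ coordinate approximations of accuracy $\epsilon / \sqrt{n}$ into a single vector approximation of Euclidean accuracy $\epsilon$. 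If preferred, the same conclusion follows at once from the vector-valued form of Stone–Weierstrass, but the componentwise reduction keeps the argument fully self-contained.
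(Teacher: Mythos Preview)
Your proof is correct: the componentwise reduction to the scalar Stone--Weierstrass theorem, with the $\epsilon/\sqrt{n}$ bookkeeping to recover the Euclidean bound, is standard and complete. The paper itself does not supply a proof of this lemma at all---it simply states it with a citation to \cite{stone1948generalized} and treats it as a known result---so there is nothing to compare against; your argument is a perfectly good justification of what the paper takes for granted.
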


We also recall the Grönwell-Bellman inequality:
\begin{lemma}[Grönwell-Bellman Inequality (\cite{bellman1943stability})]
\label{gronwell}
	Let $I = [a,b]$ be a nonempty interval in $\R$. Let $u, \alpha, \beta: I \to \R$ be continuous functions satisfying
	\[
		u(t) \leq \alpha(t) + \int_a^t \beta(s) u(s) \ds \qfa t \in I.
	\]
	If $\alpha$ is nondecreasing and $\beta$ is nonnegative, then
	\[
		u(t) \leq \alpha(t) e^{\int_a^t \beta(s) \ds} \qfa t \in I.
	\]
\end{lemma}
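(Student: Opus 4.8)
The plan is to prove this by the classical integrating-factor technique: I convert the integral inequality into a differential inequality for a running integral, which I can then solve explicitly. First I would introduce $v(t) := \int_a^t \beta(s) u(s) \ds$. Since $\beta$ and $u$ are continuous, so is $\beta u$, and hence $v$ is continuously differentiable on $I$ with $v'(t) = \beta(t) u(t)$ by the fundamental theorem of calculus, and $v(a) = 0$. The hypothesis then reads simply $u(t) \le \alpha(t) + v(t)$ for all $t \in I$.

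The key step is to turn this into a differential inequality for $v$. Multiplying the hypothesis by $\beta(t) \ge 0$ — where nonnegativity of $\beta$ is exactly what preserves the direction of the inequality — gives $v'(t) = \beta(t) u(t) \le \beta(t)\alpha(t) + \beta(t) v(t)$, i.e. $v'(t) - \beta(t) v(t) \le \beta(t)\alpha(t)$. Writing $B(t) := \int_a^t \beta(s) \ds$, I would multiply through by the positive integrating factor $e^{-B(t)}$ and recognize the left-hand side as a total derivative, obtaining $\frac{d}{dt}\bigl[ v(t) e^{-B(t)} \bigr] \le \beta(t)\alpha(t) e^{-B(t)}$.

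Next I would integrate this bound from $a$ to $t$, using $v(a) e^{-B(a)} = 0$, to get $v(t) \le e^{B(t)} \int_a^t \beta(s)\alpha(s) e^{-B(s)} \ds$. Here I would invoke the monotonicity hypothesis: since $\alpha$ is nondecreasing, $\alpha(s) \le \alpha(t)$ for every $s \le t$, so $\alpha(t)$ can be pulled out of the integral; the remaining integral evaluates as $\int_a^t \beta(s) e^{-B(s)} \ds = 1 - e^{-B(t)}$, because the integrand is $-\frac{d}{ds} e^{-B(s)}$. This collapses the bound to $v(t) \le \alpha(t)\bigl(e^{B(t)} - 1\bigr)$, and substituting back into $u(t) \le \alpha(t) + v(t)$ yields $u(t) \le \alpha(t) e^{B(t)}$, which is exactly the claim.

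The argument is essentially routine, so I do not expect a genuine obstacle; the two places demanding care are precisely the roles of the two sign hypotheses. The nonnegativity of $\beta$ is what lets the multiplication in the second step keep the inequality pointing the right way, and the monotonicity of $\alpha$ is what permits extracting $\alpha(t)$ from the integral in the final step — without it one only recovers the weaker integral bound $u(t) \le \alpha(t) + e^{B(t)}\int_a^t \beta(s)\alpha(s) e^{-B(s)} \ds$. I would also flag that only continuity of $u, \alpha, \beta$ is used, which is what justifies differentiating $v$ via the fundamental theorem of calculus throughout.
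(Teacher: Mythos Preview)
Your proof is correct and is the standard integrating-factor argument for the Gr\"onwall--Bellman inequality. Note, however, that the paper does not actually prove this lemma: it is stated with a citation to \cite{bellman1943stability} and invoked as a known result in the proof of Proposition~\ref{thm:vecfielddist_bound}. So there is no paper proof to compare against; your argument simply supplies what the paper takes for granted.
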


We are now ready to prove the following proposition:

\begin{proposition}
\label{thm:vecfielddist_bound}
	Suppose that $\X \subset \R^n$ and $\Omega \subset \R^m$ are compact sets, $T > 0$ is a finite time horizon, and $\wseq: [0, T] \to \Omega$ is a fixed control sequence that is continuous except at a finite number of discontinuities.   Then for all $f,g \in \contxw$,
	\[
		\normxw{f-g}
		\leq
		\dist(f,g)
		\leq
		\max\{Te^{LT}, 1+LTe^{LT}\} \normxw{f-g}
	\]
	where $L$ is any scalar for which either $f$ or $g$ is $L$-Lipschitz over $\X \times \Omega$.\footnote{Since $f,g$ are continuously differentiable functions over the compact domain $\X \times \Omega$, a finite $L$ exists. Indeed, we don't actually need $f$ to be $L$-Lipschitz over $\X \times \Omega$: the weaker condition that for all $\omega \in \Omega$ the function $f(\farg, \omega): \X \to \R^n$ is $L$-Lipschitz over $\X$ suffices.}
\end{proposition}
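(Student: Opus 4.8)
The plan is to prove the two inequalities separately, with the lower bound being essentially immediate and the upper bound being the substantive part. For the lower bound $\normxw{f-g} \leq \dist(f,g)$: observe that at time $t = 0$ the trajectory is $x_{f,\wseq}(0,\xinit) = \xinit = x_{g,\wseq}(0,\xinit)$ for any admissible $(\xinit,\wseq)$, so the two trajectories agree, and hence $\dot{x}_{f,\wseq}(0,\xinit) = f(\xinit, \wseq(0))$ while $\dot{x}_{g,\wseq}(0,\xinit) = g(\xinit, \wseq(0))$. The velocity-difference term inside the $\max$ in \eqref{dist} therefore equals $\norm{f(\xinit,\wseq(0)) - g(\xinit,\wseq(0))}_2$ at $t=0$. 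Taking the supremum over $\xinit \in \X$ and over control sequences with $\wseq(0)$ ranging over $\Omega$ recovers $\normxw{f-g}$ as a lower bound for $\dist(f,g)$ (one should note that $(0,\xinit,\wseq) \in \txwadmis$ trivially since the interval $[0,0]$ imposes only $\xinit \in \X$).

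For the upper bound, I would first control the trajectory distance and then the velocity distance. Fix an admissible $(\txw) \in \txwadmis$ and write, using the integral form of the ODEs, $x_{f,\wseq}(s,\xinit) - x_{g,\wseq}(s,\xinit) = \int_0^s \bigl(f(x_{f,\wseq}(r,\xinit), \wseq(r)) - g(x_{g,\wseq}(r,\xinit), \wseq(r))\bigr)\,\text{d}r$. Inserting and subtracting $g(x_{f,\wseq}(r,\xinit),\wseq(r))$, the integrand splits into a term bounded by $\normxw{f-g}$ and a term bounded by $L \norm{x_{f,\wseq}(r,\xinit) - x_{g,\wseq}(r,\xinit)}_2$ using the $L$-Lipschitz property of $g$ in $x$ (uniformly in $\omega$, which is exactly the weaker condition flagged in the footnote). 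This gives an inequality of the form $u(s) \leq \normxw{f-g}\, s + L\int_0^s u(r)\,\text{d}r$ for $u(s) \eqdef \norm{x_{f,\wseq}(s,\xinit) - x_{g,\wseq}(s,\xinit)}_2$; applying Grönwell-Bellman (Lemma \ref{gronwell}) with $\alpha(s) = \normxw{f-g}\,s$ (nondecreasing) and $\beta \equiv L$ (nonnegative) yields $u(t) \leq \normxw{f-g}\, t\, e^{Lt} \leq T e^{LT} \normxw{f-g}$.

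For the velocity bound, use $\dot{x}_{f,\wseq}(t,\xinit) - \dot{x}_{g,\wseq}(t,\xinit) = f(x_{f,\wseq}(t,\xinit),\wseq(t)) - g(x_{g,\wseq}(t,\xinit),\wseq(t))$, split again via $g(x_{f,\wseq}(t,\xinit),\wseq(t))$, and bound by $\normxw{f-g} + L\, u(t) \leq \normxw{f-g} + L T e^{LT} \normxw{f-g} = (1 + LTe^{LT})\normxw{f-g}$. Taking the maximum of the two bounds and then the supremum over all admissible $(\txw)$ gives $\dist(f,g) \leq \max\{Te^{LT}, 1 + LTe^{LT}\}\normxw{f-g}$, completing the proof. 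The main technical care points — rather than obstacles — are: (a) ensuring the integral representation of the trajectories is valid given that $\wseq$ has finitely many discontinuities, which is fine because the integrand is then piecewise continuous hence Riemann integrable (the footnote on $\wseqadmis$); (b) checking that within $\txwadmis$ both trajectories stay in $\X$ so that the Lipschitz bound on $\X \times \Omega$ actually applies along the whole path $[0,t]$; and (c) handling the case where it is $f$ rather than $g$ that is assumed Lipschitz — this is symmetric, one simply inserts $f(x_{g,\wseq}(r,\xinit),\wseq(r))$ instead. None of these requires more than routine verification.
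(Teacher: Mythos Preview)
Your proposal is correct and follows essentially the same approach as the paper: the lower bound via the $t=0$ velocity term, the trajectory bound via the integral form plus a split and Gr\"onwell--Bellman, and the velocity bound by one more split and substitution of the trajectory estimate. The only cosmetic difference is that the paper takes $f$ as the Lipschitz function and inserts $f(\xgw,\wseq)$ in the splitting, whereas you take $g$ Lipschitz and insert $g(\xfw,\wseq)$; as you note in (c), these are symmetric.
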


\begin{proof}
	The first inequality holds because
	\begin{align*}
		\dot{x}_{f,\wseq}(0, \xinit) &= f(\xinit, \wseq(0)), \\
		\dot{x}_{g,\wseq}(0, \xinit) &= g(\xinit, \wseq(0))
	\end{align*}
	so
	\[
		\normxw{f-g} 
		= \max_{\substack{\xinit \in \X \\ \wseq(0) \in \Omega}} \norm{f(\xinit, \wseq(0)) - g(\xinit, \wseq(0))}_2
		\leq \dist(f,g).
	\]
	For the second inequality, without loss of generality suppose that $f$ is $L$-Lipschitz. Fix a $(\txw) \in \txwadmis$. We first bound
	\[
		\norm{
			x_{f,\wseq}(t, \xinit) - x_{g,\wseq}(t, \xinit)
		}_2.
	\]
	We have that
	\begin{align*}
		\xfw(\tx) - \xgw(\tx)
		= \
		&\int_0^t f(\xfw(\sx), \wseq(s)) - g(\xgw(\sx), \wseq(s)) \ds \\
		= \
		&\int_0^t f(\xfw(\sxw), \wseq(s)) - f(\xgw(\sx), \wseq(s)) \ds \\
		&+
		\int_0^t f(\xgw(\sxw), \wseq(s)) - g(\xgw(\sx), \wseq(s)) \ds.
	\end{align*}
	Thus, by triangle inequality
	\begin{equation}
	\begin{split}
	\label{normbound_triangle}
		\norm{ \xfw(\tx) - \xgw(\tx) }_2
		\leq \
		&\int_0^t \norm{  f(\xfw(\sx), \wseq(s)) - f(\xgw(\sx), \wseq(s)) }_2 \ds \\
		&+ \int_0^t \norm{ f(\xgw(\sx), \wseq(s)) - g(\xgw(\sx), \wseq(s)) }_2 \ds.
	\end{split}
	\end{equation}
	For any $s \in [0, t]$, we can upper bound
	\[
		\norm{  f(\xfw(\sx), \wseq(s)) - f(\xgw(\sx), \wseq(s)) }_2
		\leq
		L \norm{\xfw(\sx) - \xgw(\sx)}_2
	\]
	by the Lipschitz property of $f$, and upper bound
	\[
	 	\norm{ f(\xgw(\sx), \wseq(s)) - g(\xgw(\sx), \wseq(s)) }_2
		\leq
		\normxw{f-g}.
	\]
	Inserting these two bounds back into \eqref{normbound_triangle} gives
	\[
		\norm{ \xfw(\tx) - \xgw(\tx) }_2
		\leq
		L \int_0^t \norm{\xfw(\sx) - \xgw(\sx)}_2 \ds
		+
		t \normxw{f-g}.
	\]
	Then applying the Grönwell-Bellman inequality (Lemma \ref{gronwell}) with
	\begin{align*}
		u(t) &= \norm{ \xfw(\tx) - \xgw(\tx) }_2, \\
		\alpha(t) &= t \normxw{f-g}, \\
		\beta(t) &= L
	\end{align*}
	gives us the bound
	\begin{equation}
	\label{bound:xdist}
		\norm{ \xfw(\tx) - \xgw(\tx) }_2
		\leq
		te^{Lt} \normxw{f-g}.
	\end{equation}
	An upper bound for $\norm{ \dotxfw(\tx) - \dotxgw(\tx) }_2$ follows readily since
	\begin{equation}
	\begin{split}
	\label{bound:xdotdist}
		\norm{ \dotxfw(\tx) - \dotxgw(\tx) }_2
		= \
		&\norm{ f(\xfw(\tx), \wseq(s)) - g(\xgw(\tx), \wseq(s)) }_2 \\
		\leq \
		&\norm{ f(\xfw(\tx), \wseq(s)) - f(\xgw(\tx), \wseq(s)) }_2  \\
		&+\norm{ f(\xgw(\tx), \wseq(s)) - g(\xgw(\tx), \wseq(s)) }_2 \\
		\leq \
		&L \norm{ \xfw(\tx) - \xgw(\tx) }_2 + \normxw{f-g} \\
		\leq \
		&(1 + Lte^{Lt}) \normxw{f-g},
	\end{split}
	\end{equation}
	where the last inequality is due to \eqref{bound:xdist}. Finally, putting \eqref{bound:xdist} and \eqref{bound:xdotdist} together and using the fact that $t \in [0, T]$ gives us the uniform bound
	\begin{align*}
		&\max\{
			\norm{\xfw(t, \xinit) - \xgw(t, \xinit)}_2, \,
			\norm{\dotxfw(t, \xinit) - \dotxgw(t, \xinit)}_2
		\} \\
		&\leq
		\max\{te^{Lt}, 1 + Lte^{Lt}\} \normxw{f-g} \\
		&\leq
		\max\{Te^{LT}, 1 + LTe^{LT}\} \normxw{f-g}, \\
	\end{align*}
	and taking supremum over $(\txw) \in \txwadmis$ gives the desired bound
	\[
		\dist(f,g) \leq \max\{Te^{LT}, 1 + LTe^{LT}\} \normxw{f-g}. 
	\]
\end{proof}

\subsection{Side-Information Constraints and $\delta$-Satisfiability}
\label{appsec:SIARc_sat}

%
%

Throughout this work, we consider side information constraints $S$ for which there exists an appropriate polynomial $q: \R^n \times \R^m \to \R^n$ such that for any given dynamics $f$,
\begin{equation}
\label{satS}
	\text{$f$ satisfies $S$}
	\Longleftrightarrow
	\inp{q_S(\action, \control)}{f(\action, \control)}
	 \geq 0 \qqfa \action \in \X, \, \control \in \Omega.
\end{equation}
(Both \ref{eq:RobustForwardInvariance} and the relaxed version of \ref{eq:PositiveCorrelation} we enforce can be broken down into multiple constraints of this form.)

By defining
\[
	L_{S}(f) := \max\{0, \max_{\substack{\action \in \X, \\ \control \in \Omega}} - \inp{q_S(\action, \control)}{f(\action, \control)}\},
\]
we immediately have that $L_{S}(f) \geq 0 \fa f$ and
\begin{equation}
\begin{split}
\label{eq:LS}
	L_{S}(f) = 0 &\Longleftrightarrow \inp{q_S(\action, \control)}{f(\action, \control)} \geq 0 \ \ \qqfa \action \in \X, \, \control \in \Omega, \\
	L_{S}(f) \leq \delta &\Longleftrightarrow \inp{q_S(\action, \control)}{f(\action, \control)} + \delta \geq 0 \qfa \action \in \X, \, \control \in \Omega.
\end{split}
\end{equation}

$\ls(f)$ can thus be taken as a measure of how well the dynamics $f$ satisfies the side information constraint $S$, and we say that dynamics $f$ \emph{$\delta$-satisfies} $S$ if $L_{S}(f) \leq \delta$.

%

We also have the following lemma ensuring that if two dynamics are bounded in the $\normxw{\farg}$ norm then the difference between their $\ls$ values are bounded. In particular, this implies that if $f$ exactly satisfies $S$ (i.e., $\ls(f) = 0$) and $f, g$ are sufficiently close, then $g$ $\delta$-satisfies $S$.

\begin{proposition}
\label{lem:LSnormbound}
For side constraints $S$ of the form \eqref{satS} we have that 
\[
	\forall \delta > 0 \ \exists \, \epsilon > 0: \quad \normxw{f-g} \leq \epsilon \ \Longrightarrow \ \abs{L_S(f) - L_S(g)} \leq \delta.
\]
\end{proposition}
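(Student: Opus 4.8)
\textbf{Proof proposal for Proposition~\ref{lem:LSnormbound}.}
The plan is to show that the map $f \mapsto L_S(f)$ is Lipschitz (indeed $1$-Lipschitz) with respect to the $\normxw{\farg}$ norm, which immediately gives the claimed implication with $\epsilon = \delta$. The starting point is the observation that $\inp{q_S(\action,\control)}{\farg}$ is a bounded linear functional on $\R^n$ for each fixed $(\action,\control)$: by Cauchy--Schwarz,
\[
	\abs{\inp{q_S(\action,\control)}{f(\action,\control)} - \inp{q_S(\action,\control)}{g(\action,\control)}}
	\leq \norm{q_S(\action,\control)}_2 \, \norm{f(\action,\control) - g(\action,\control)}_2.
\]
Since $q_S$ is a polynomial and $\X \times \Omega$ is compact, $M \eqdef \max_{\action \in \X, \control \in \Omega} \norm{q_S(\action,\control)}_2$ is finite, so the right-hand side is at most $M \normxw{f-g}$, uniformly in $(\action,\control)$.

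First I would record the elementary fact that the two operations defining $L_S$ — namely $h \mapsto \max_{\action,\control}\{-h(\action,\control)\}$ over a pointwise-bounded family, and then $v \mapsto \max\{0,v\}$ — are each $1$-Lipschitz. Concretely, for real-valued functions $a,b$ on the compact set $\X\times\Omega$ one has $\abs{\sup a - \sup b} \leq \sup \abs{a-b}$, and $\abs{\max\{0,s\} - \max\{0,t\}} \leq \abs{s-t}$ for reals $s,t$. Applying the first to $a(\action,\control) = -\inp{q_S(\action,\control)}{f(\action,\control)}$ and $b(\action,\control) = -\inp{q_S(\action,\control)}{g(\action,\control)}$, and then the second, chains together to give
\[
	\abs{L_S(f) - L_S(g)}
	\leq \max_{\action \in \X,\, \control \in \Omega} \abs{\inp{q_S(\action,\control)}{f(\action,\control)} - \inp{q_S(\action,\control)}{g(\action,\control)}}
	\leq M \normxw{f-g}.
\]

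Given this, the proposition follows by taking $\epsilon = \delta / M$ (or $\epsilon = \min\{1, \delta/M\}$ if one wants a uniform bound), since then $\normxw{f-g} \leq \epsilon$ forces $\abs{L_S(f)-L_S(g)} \leq \delta$. I would also note the stated consequence as a one-line corollary: if $L_S(f) = 0$ then $L_S(g) = \abs{L_S(g) - L_S(f)} \leq \delta$, i.e.\ $g$ $\delta$-satisfies $S$. There is no real obstacle here — the only thing to be slightly careful about is ensuring the inner $\max$ over $\X \times \Omega$ is attained (or at least finite), which is exactly where compactness of $\X$ and $\Omega$ together with continuity of $q_S$ and of $f,g$ is used; this is the same mild hypothesis already invoked throughout the section. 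The mild subtlety worth a sentence is that the supremum-difference inequality $\abs{\sup a - \sup b}\le \sup\abs{a-b}$ holds for arbitrary bounded families, so no attainment is even strictly needed for the Lipschitz estimate itself — attainment of the outer $\max\{0,\cdot\}$ argument is automatic once the inner quantity is finite.
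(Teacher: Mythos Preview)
Your proof is correct and follows essentially the same route as the paper: bound $\abs{L_S(f)-L_S(g)}$ via the sup-difference inequality, apply Cauchy--Schwarz, use compactness of $\X\times\Omega$ to get $M=\normxw{q_S}<\infty$, and take $\epsilon=\delta/M$. One small slip to clean up: your opening sentence asserts the map is $1$-Lipschitz with $\epsilon=\delta$, whereas your own derivation (correctly) yields $M$-Lipschitz with $\epsilon=\delta/M$; your use of the $1$-Lipschitz property of $v\mapsto\max\{0,v\}$ in place of the paper's case analysis on $L_S(f)>L_S(g)$ is a minor tidying of the same argument.
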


\begin{proof}
	Note that we need only concern ourselves with $f,g$ for which $\ls(f) \neq \ls(g)$. We can then without loss of generality assume that $\ls(f) > \ls(g)$. We shall use in our proofs the fact that if $h_1, h_2 : \mathcal{Y} \to \R$, then
			\begin{equation}
			\label{eq:hfact}
				\max_y h_1(y) - \max_y h_2(y)
				\leq 
				\max_y \abs{h_1(y) - h_2(y)}.
			\end{equation}
			
	We then have that
			\begin{align*}
				&\abs{\ls(f) - \ls(g)}  &\\
				&\qquad= 
					\ls(f) - \ls(g) \qquad&\\
				&\qquad= 
					\max_{\action, \control}( - \inp{q_S(\action, \control)}{f(\action, \control)}) 
					- \ls(g)
					\qquad
					&\text{since } \ls(f) > 0 \\
				&\qquad
					\leq \max_{\action, \control}( - \inp{q_S(\action, \control)}{f(\action, \control)})
					- \max_{\action, \control, i}( - \inp{q_S(\action, \control)}{g(\action, \control)})
					\qquad
					&\text{since } \ls(g) \geq 0 \\
				&\qquad\leq
					\max_{\action, \control}
						\abs{
							- \inp{q_S(\action, \control)}{f(\action, \control)}
							+  \inp{q_S(\action, \control)}{g(\action, \control)}
							}
					\qquad
					&\text{by } \eqref{eq:hfact}\\
				&\qquad=
					 \max_{\action, \control}
					 	\abs{ \inp{ q_S(\action, \control) }{ f(\action, \control) - g(\action, \control) } }
					& \\
				&\qquad\leq
					\max_{\action, \control}
						\left(
							\norm{ q_S(\action, \control) }_2
							\norm{ f(\action, \control) - g(\action, \control) }_2 
						\right)
					&\text{by Cauchy-Schwarz} \\
				&\qquad\leq
						\max_{\action, \control} (\norm{ q_S(\action, \control) }_2)
						\cdot
						\max_{\action, \control} (\norm{ f(\action, \control) - g(\action, \control) }_2)
					& \\
				&\qquad=
					 \normxw{q_S} \cdot \normxw{f-g}  .
			\end{align*}
			Finally, $\normxw{q_S} < \infty$ since the polynomial $q_S$ is continuous over $\X \times \Omega$. We can thus set
			\[
				\epsilon := \frac{\delta}{ \normxw{q_S}}
			\]
			to ensure that
			\[
				\normxw{f-g} \leq \epsilon
				\Longrightarrow
				\abs{\ls(f) - \ls(g)} < \delta. 
			\]
\end{proof}

\subsection{Main Theorem (Formal) and Proof}
\label{appsec:SIARc_thm}

We are now in a position to state and prove the formal theorem (an informal version of which was stated in Theorem \ref{thm:SIARc_informal} of the main text):
\begin{customthm}{1}[Main Theorem, Formal]
\label{thm:SIAR}
	Fix a time horizon $T$, desired approximation accuracy $\epsilon > 0$, and desired side information accuracy $\delta > 0$. Consider any continuously differentiable dynamics $f \in \contxw$ that satisfies side information constraints $S_k$ for $k = 1, 2, \ldots, K$ that can be expressed in the form \eqref{satS}. Then there exists polynomial dynamics $p: \R^n \times \R^m \to \R^n$ that is $\epsilon$-close to $f$ and $\delta$-satisfies the same side information, i.e., satisfies
	\[
		\dist(f,p) \leq \epsilon \qquad \text{and} \qquad L_{S_k}(p) \leq \delta, \ k = 1, \ldots, K.
	\]
	Furthermore $L_{S_k}(p) \leq \delta$ has an SOS certificate, i.e.,
	\[
		\inp{q_{S_k}(\action, \control)}{p(\action, \control)} + \delta \in QM(\X \times \Omega),
	\]
	where $QM$ is the quadratic module of (an appropriate description) of $\X \times \Omega$.
	
%
\end{customthm}

\begin{proof}
	In the first step, we rely on the Stone-Weierstrass theorem to show the existence of a polynomial $p$ that satisfies $\normxw{p-f} \leq \gamma$ for an appropriately chosen $\gamma$ (which depends on $\epsilon, \delta$)
	and hence, by Propositions \ref{thm:vecfielddist_bound} and \ref{lem:LSnormbound} respectively, is $\epsilon$-close to $f$ and $\frac{\delta}{2}$-satisfies the side information, i.e., 
	\[
		\dist(f,p) \leq \epsilon \qquad \text{and} \qquad L_{S_k}(p) \leq \frac{\delta}{2}, \ k = 1, \ldots, K.
	\]
	Consequently, since by \eqref{eq:LS} we have that $\inp{q_{S_k}(\action, \control)}{p(\action, \control)} + \frac{\delta}{2} \geq 0 \fa x \in \X$, so adding $\frac{\delta}{2}$ gives
	\[
		\inp{q_{S_k}(\action, \control)}{p(\action, \control)} + \delta >  0 \qfa \action \in \X, \, \control \in \Omega.
	\]
	Finally, by Putinar's Positivstellensatz (\cite{putinar1993positive}), since an appropriate description\footnote{
		That a semialgebraic set satisfies the Archimedean property is description-dependent. Since $\X \times \Omega$ is compact with an easily obtainable bound $R$ on its Euclidean norm, the redundant constraint $R - \norm{(\action, \control)}_2^2 \geq 0$ can be added to the description of the feasible set $\X \times \Omega$ to make it Archimedean so that we may apply Putinar's Positivstellensatz. (See, e.g., the discussion in \cite{laurent_sums_2008} Section 3.6 for more details.)
	} of $\X \times \Omega$ is Archimedean, we have that $\inp{q_{S_k}(\action, \control)}{p(\action, \control)} + \delta \in QM(\X \times \Omega)$, i.e., there exists a sum-of-squares certificate that certifies that $L_{S_k}(p) \leq \delta$ (i.e., that $q_S(x,p(x)) + \delta \geq 0$ on $\X \times \Omega$).
\end{proof}

What this theorem means for SIARc is that, given true dynamics dynamics $f$ we are guaranteed to have polynomial $p$ that is close to $f$ (in the $\dist$ distance) and $\delta$-satisfies the side information; furthermore, since $\delta$-satisfiability has an SOS certificate we can search over this space in our regression problem by using the SOS hierarchy.

\end{document}